\documentclass[a4paper,12pt]{article}
\usepackage{latexsym,amssymb,amsfonts,amsmath,amsthm}
\usepackage[dvips]{graphicx}
\setlength{\evensidemargin}{-3mm}
\setlength{\oddsidemargin}{-3mm}
\setlength{\topmargin}{-10mm}
\setlength{\textheight}{230mm}
\setlength{\textwidth}{165mm}

\newtheorem{theorem}{Theorem}
\newtheorem{lemma}{Lemma}

\newtheorem{proposition}{Proposition}
\newtheorem{remark}{Remark}
\newtheorem{definition}{Definition}

\newcommand{\bs}[1]{\boldsymbol{#1}}

\newcommand{\fp}{\widetilde{g}^{(+)}}
\newcommand{\fm}{\widetilde{g}^{(-)}}
\newcommand{\lp}{\widetilde{\lambda}^{(+)}}
\newcommand{\lm}{\widetilde{\lambda}^{(-)}}


\title{{\Large {\bf One-dimensional quantum walks via generating function and the CGMV method   
}
}}
\author{ 
{\small 
Norio Konno,$^{1}$ 
\footnote{konno@ynu.ac.jp 
}\quad 
Etsuo Segawa,$^{2}$ 
\footnote{e-segawa@m.tohoku.ac.jp 
}\quad
}\\ 
{\scriptsize $^{1}$ 
Department of Applied Mathematics, Faculty of Engineering, Yokohama National University
}\\
{\scriptsize 
Hodogaya, Yokohama 240-8501, Japan
} \\
{\scriptsize $^2$ 
Graduate School of Information Sciences, Tohoku University
}\\
{\scriptsize 
Aoba, Sendai, 980-8579, Japan
} \\
} 
\vskip 1cm

\date{\empty }
\pagestyle{plain}

\begin{document}
\maketitle

\par\noindent
\begin{small}
\par\noindent
{\bf Abstract}
We treat a quantum walk (QW) on the line whose quantum coin at each vertex tends to be the identity 
as the distance goes to infinity. 
We obtain a limit theorem that this QW exhibits localization with not an exponential but a ``power-law" decay around the origin 
and a ``strongly" ballistic spreading called bottom localization in this paper. 
This limit theorem implies the weak convergence with linear scaling 
whose density has two delta measures at $x=0$ (the origin) and $x=1$ (the bottom) without continuous parts. 

\footnote[0]{
{\it Key words and phrases.} Quantum walk, CMV matrix, generating function
}

\end{small}

\setcounter{equation}{0}

\section{Introduction}
Quantum walks (QWs) are considered as quantum counterparts of random walks~\cite{Meyer}. 
Primitive forms of QWs on lines have already appeared as discrete time and space analogue of a relativistic motion of a free particle 
known as the Feynman checkerboard \cite{Feynman}, and a toy model to construct the quantum probability theory discussed in \cite{Gudder}. 
The QWs on the line treated in our paper are denoted by a complex-valued sequence $(\gamma_j)_{j\in \mathbb{Z}}$  ($0\leq |\gamma_j|\leq 1$). 
The square absolute value of each parameter $\gamma_j$
assigned at each vertex $j$, $|\gamma_j|^2$, corresponds to a reflection strength at each vertex. 
For an extreme case, $|\gamma_j|=0$ for all $j$, the walk has no reflection. 
Thus in this case, the scaled distance from the origin weakly converges as follows: 
\[ X_n/n \Rightarrow \delta_1(x)\;\;(n\to\infty),  \]
where $\Rightarrow$ means weak convergence. 
The generating function is one of the effective tools to get stochastic behaviors of QWs with $|\gamma_j|<1$, 
for example, localization and weak convergence~\cite{Pemantle,CHKS,CKS,KLS}. 
The two functions $g_j^{(\pm)}:\mathbb{C}\to \mathbb{C}$ ($j\in \mathbb{Z}$) 
determined by the following continued fraction relationship give an expression for the 
generating function~\cite{KLS}. 
\begin{align}\label{intro}
	\fp_{j+1}(z) = -\frac{z^2\overline{\gamma}_{j+1}+\fp_{j}}{z^2+\gamma_{j+1}\fp_j}\;\;\mathrm{and}\;\;
        \fm_{j-1}(z) = -\frac{-z^2\gamma_{j-1}+\fm_j}{z^2-\overline{\gamma}_{j-1}\fm_j}. 
\end{align}
Indeed, for spatial homogeneous case, i.e., its coin parameters are $\gamma_j=\gamma$, 
the generating function plays an important role
to show that the walks belong to a universality class of QWs \cite{CHKS,CKS,KLS}, that is, 
\[ X_n/n \Rightarrow c\delta_0(x)+(1-c)r(x)f_K(x;\rho), \;\;(n\to\infty) \]
where $f_K(x;\rho)$ has anti-bell shape with the finite support $|x|\in [0,\rho)$ ($\rho=\sqrt{1-|\gamma|^2}$), 
and $c\in [0,1)$ reflects a localization property, and the rational polynomial $r(x)$ depends on the setting of the walk, 
for example, the initial state~\cite{CHKS,Konno}, boundary condition~\cite{CKS,LiuPeturante}, and spatial one defect~\cite{KLS}. 
Here, $f_K(x;\rho)$ was first introduced by \cite{Konno,Konno2} (2002)
\begin{equation}\label{fK} f_K(x;\rho)=\frac{\boldsymbol{1}_{\{|x|<\rho\}}(x)|\gamma|}{\pi (1-x^2)\sqrt{\rho^2-x^2}}. \end{equation}

It has been a quite interesting problem to get a limit theorem corresponding to the above weak convergences 
in the case of varied coin parameters in a whole vertices on the line~\cite{Joye,Hanover,GV}. 
Recently, it was shown that QWs on the line are described by the CMV matrix~\cite{CGMV0}. 
The CMV matrix is a minimal representation of recurrence relations 
between the orthogonal polynomials of a measure $\mu$ on the unit circle determined by the Schur parameter $(\gamma_0,\gamma_1,\dots)$~\cite{CMV1,CMV2}. 
There is a nice review on the relationship between QWs and the CMV matrix~\cite{CGMV}.
A well developed theory of the spectral analysis of the CMV matrix\cite{BS} advantages the analysis of QWs, 
especially, its spectrum. 
Indeed, for a sequence of homogeneous Schur parameter $\gamma_j=\gamma$ ($0<|\gamma|<1$), the spectral measure is described by 
\begin{equation}
	d\mu(e^{i\theta})=w(e^{i\theta})\frac{d\theta}{2\pi}+\sum_{j}c_j\delta(e^{i\theta}-e^{i\theta_j}), 
\end{equation}
where $w(e^{i\theta})$ is an absolutely continuous part which has a finite support 
$\{\theta\in[0,2\pi); |\cos\theta|\leq \rho\}$, and $c_j>0$ is a mass at $e^{i\theta_j}$. 
For the extreme cases, $|\gamma|=0$ implies $d\mu$ becomes the uniform measure on $[0,2\pi)$. 
The existence of the mass point gives localization property of the QW~\cite{KS}. 
On the other hand, the absolutely continuous part is related to a ballistic transport, 
in fact, the real part of the support for the spectral measure, $\rho$, reflects 
the stochastic property in the weak limit measure of the QW, that is, the walk frequency exists between $|x|\leq \rho n$. 
The Schur function $f:\mathbb{C}\to \mathbb{C}$ has an important role to describe the spectral measure. 
The Schur function is obtained by the following continuum fractional expression known as the Schur algorithm: 
\begin{equation}\label{SchurFunc_intro}
f_0(z) = f(z);\; f_{j+1}(z) = \frac{1}{z}\frac{f_j (z)-\gamma_j}{1-\overline{\gamma_j}f_j(z)}, \;j\geq  0.
\end{equation}

Comparing Eq.~(\ref{SchurFunc_intro}) with Eq.~(\ref{intro}), 
we get a relationship (see Lemma~\ref{bridgeGS}) which connects 
the spectral analysis of the CMV matrix between the generating function of QWs. 
The QW treated here has position dependent coin parameters which tends to 
zero as the distance from the origin goes to infinity. 
More concretely, the coin parameters are $\gamma_j=1/(r+j)$ for $j\in \{0,1,\dots\}$ with $r>1$.
Our results on the QWs treated here suggest that 
this relationship can be a key to get detailed stochastic behaviors of QWs from its spectral analysis. 
On the other hand, Refs.~\cite{GVWW} and \cite{BGVW} focus on an inclusive relationship between 
the recurrence properties proposed by \cite{SKJ,GVWW} and its spectrum. 
We obtain an explicit expression for the limit distribution outside the universality class. 
For our best knowledge, this is the first result on an explicit expression for the limit distribution of a position dependent QW over the whole vertices: 
we show co-existence of localization with power-law decay around the origin and a strongly ballistic transport. 
(See Theorems~\ref{localization} and \ref{weakconv}.)

This paper is organized as follows. 
In Sect.~2, we give the definitions of the QWs on the line and 
present its generating functions in a general setting. 
Connections between the CMV matrix and the QWs are devoted in Sect.~3. 
Using this relationship obtained by Sect.~3, 
we compute limit theorems concretely for a spatial dependent QW with the coin parameter $\gamma_j=1/(r+j)$ in Sect.~4. 
Finally, we give a summary and discussion in Sect~5.  
\section{Generating function of quantum walks}
From now on, we consider the three types of QWs; 
(i) first kind of QW on infinite half line (H-QW(1)), 
(ii) second kind of QW on half infinite line (H-QW(2)), and (iii) QW on doubly infinite line (D-QW). 
The detailed definitions are in the following:
\begin{definition}
We assign two dimensional unitary matrices $\{H_j\}_{j\in \mathbb{Z}}$ at each vertex of $\mathbb{Z}$.
Put the two Hilbert spaces considering here as 
$\mathcal{H}_+=\mathrm{span}\{\bs{\delta}_{j,L},\bs{\delta}_{j,R}; j\in \mathbb{Z}_+\}$ and 
$\mathcal{H}=\mathrm{span}\{\bs{\delta}_{j,L},\bs{\delta}_{j,R}; j\in \mathbb{Z}\}$. 
The walks start with the initial state $\alpha\bs{\delta}_{0,L}+\beta\bs{\delta}_{0,R}$ where $\alpha,\beta\in \mathbb{C}$ 
with $|\alpha|^2+|\beta|^2=1$. 
\begin{enumerate}
\item First kind of QW on half line (H-QW(1)) \\
Total state space: $\mathcal{H}_+$ \\
Time evolution $E=SC$ on $\mathcal{H}_+$: 
\begin{align}
C &= \bigoplus_{j\in \mathbb{Z}_+}H_j, \\
S\bs{\delta}_{j,L} &= 
	\begin{cases} \bs{\delta}_{j-1,L} & \text{: $j\geq 1$} \\ \bs{\delta}_{0,R} & \text{: $j=0$ }  \end{cases},\;
S\bs{\delta}_{j,R} = \bs{\delta}_{j+1,R}. 
\end{align}
\item Second kind of QW on half line (H-QW(2))\\
Total state space: $\mathcal{H}_+$ \\
Time evolution $E=SC$ on $\mathcal{H}_+$: 
We choose quantum coin at the origin and the initial coin state so that 
$\langle\bs{\delta}_{0,R},H_0\bs{\delta}_{0,R}\rangle=\langle\bs{\delta}_{0,L},H_0\bs{\delta}_{0,L}\rangle=0$, and $\beta=0$. 
Then 
\begin{align}
C &= \bigoplus_{j\in \mathbb{Z}_+}H_j 
	\mathrm{\;with\;} (H_0)_{\bs{\delta}_{0,J},\bs{\delta}_{0,J}}=0 \mathrm{\;for\;} J\in\{L,R\}. \\
S\bs{\delta}_{j,L} &= \bs{\delta}_{j-1,L},\;
S\bs{\delta}_{j,R} = \bs{\delta}_{j+1,R}. 
\end{align}
\item QW on doubly infinite line (D-QW) \\
Total state space: $\mathcal{H}$ \\
Time evolution $E=SC$ on $\mathcal{H}$: 
\begin{align}
C &= \bigoplus_{j\in \mathbb{Z}_+}H_j \\
S\bs{\delta}_{j,L} &= \bs{\delta}_{j-1,L},\;
S\bs{\delta}_{j,R} = \bs{\delta}_{j+1,R}. 
\end{align}
\end{enumerate}
\end{definition}
We should remark that the subspace of $\mathcal{H}_+$, $\mathrm{span}\{\{\bs{\delta}_{0,L}\}\cup\{\bs{\delta}_{j,L},\bs{\delta}_{j,R}; j\geq 1\}\}$, 
is invariant under the action of the time evolution of H-QW(2). 
Put 
$\langle \bs{\delta}_{j,L}, H_j\bs{\delta}_{j,L}\rangle=a_j$, 
$\langle \bs{\delta}_{j,L}, H_j\bs{\delta}_{j,R}\rangle=b_j$, 
$\langle \bs{\delta}_{j,R}, H_j\bs{\delta}_{j,L}\rangle=c_j$, 
$\langle \bs{\delta}_{j,R}, H_j\bs{\delta}_{j,R}\rangle=d_j$. 
The matrix representations for the time evolutions of H-QW(1), H-QW(2) and D-QW are expressed 
by $E_1$, $E_2$, and $E_D$, respectively as follows: 
\begin{equation*}
E_{1}=
\left[
\begin{array}{cc|cc|cc|cc|c}
0   & 0   & a_1 & b_1 &     &     &     &     &  \\
a_0 & b_0 & 0   & 0   &     &     &     &     &  \\ \hline
0   & 0   &     &     & a_2 & b_2 &     &     &  \\
c_0 & d_0 &     &     & 0   & 0   &     &     &  \\ \hline
    &     & 0   & 0   &     &     & a_3 & b_3 &  \\
    &     & c_1 & d_1 &     &     & 0   & 0   &  \\ \hline
    &     &     &     & 0   & 0   &     &     & \ddots \\
    &     &     &     & c_2 & d_2 &     &     &  \\ \hline
    &     &     &     &     &     & \ddots    &     & \ddots
\end{array}
\right], 
\;\;
E_{2}=
\left[
\begin{array}{c|cc|cc|cc|c}
0   &  a_1 & b_1 &     &     &     &     &  \\ \hline
0   &      &     & a_2 & b_2 &     &     &  \\
c_0 &      &     & 0   & 0   &     &     &  \\ \hline
    &  0   & 0   &     &     & a_3 & b_3 &  \\
    &  c_1 & d_1 &     &     & 0   & 0   &  \\ \hline
    &      &     & 0   & 0   &     &     & \ddots \\
    &      &     & c_2 & d_2 &     &     &  \\ \hline
    &      &     &     &     & \ddots    &     & \ddots
\end{array}
\right]
\end{equation*}

\begin{equation*}
E_{D}=
\left[
\begin{array}{c|cc|cc|cc|cc|c}
\ddots &        & \ddots &        &        &     &     &     &     &  \\ \hline 
       &        &        & a_{-1} & b_{-1} &     &     &     &     &  \\
\ddots &        &        & 0      & 0      &     &     &     &     &  \\ \hline
       & 0      & 0      &        &        & a_0 & b_0 &     &     &  \\
       & c_{-2} & d_{-2} &        &        & 0   & 0   &     &     &  \\ \hline
       &        &        & 0      & 0      &     &     & a_1 & b_1 &  \\
       &        &        & c_{-1} & d_{-1} &     &     & 0   & 0   &  \\ \hline
       &        &        &        &        & 0   & 0   &     &     & \ddots \\
       &        &        &        &        & c_0 & d_0 &     &     &  \\ \hline
       &        &        &        &        &     &     & \ddots    &     & \ddots
\end{array}
\right],
\end{equation*}
where the orders of the basis in the above matrices for H-QW(1), H-QW(2) and D-QW are 
$$((0,L),(0,R),(1,L),(1,R),(2,L),(2,R),\dots),\;((0,L),(1,L),(1,R),(2,L),(2,R),\dots)$$ $$\mathrm{and}$$ $$(\dots,(-1,L),(-1,R),(0,L),(0,R),(1,L),(1,R),\dots),$$ 
respectively. 
The following lemma obtained by \cite{CGMV, KS} is useful to simplify our model. 
\begin{lemma}
Put the quantum coin assigned at position $j$ described by complex valued parameter $\gamma_j$ with $|\gamma_j|<1$ as 
\begin{equation}
H_j^{(\gamma_j)}= \begin{bmatrix} \rho_j & \overline{\gamma}_j \\ -\gamma_j & \rho_j \end{bmatrix}
\end{equation} 
where $\rho_j=\sqrt{1-|\gamma_j|^2}$, 
and we put $\bs{\delta}_{j,L}\cong {}^T[1,0]$, $\bs{\delta}_{j,R}\cong {}^T[0,1]$ 
on the subspace of $\mathcal{H}_j\equiv \mathrm{span}\{\bs{\delta}_{j,L},\bs{\delta}_{j,R}\}$.
Then for each case of QW, i.e., H-QW(1), H-QW(2) and D-QW, with quantum coins $\{H_j\}_{j}$, 
there exists an infinite diagonal matrix $D$ on $\mathcal{H}$ and a sequence of complex-valued parameters $\bs{\gamma}\equiv (\gamma_j)_j$ with $|\gamma_j|<1$, such that 
\begin{equation}
E=D^\dagger U^{(\bs{\gamma})} D. 
\end{equation}
Here $U^{(\bs{\gamma})}=SC^{(\bs{\gamma})}$, where 
$C^{(\bs{\gamma})}=\bigoplus_j H_j^{(\gamma_j)}$. 
\end{lemma}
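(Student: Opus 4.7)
The plan is to construct $D$ as a block-diagonal unitary $D=\bigoplus_j \mathrm{diag}(e^{i\phi_{j,L}},e^{i\phi_{j,R}})$ whose phases are fixed so that conjugation by $D$ absorbs all residual entry-wise phases of each coin $H_j$ into the canonical form $H_j^{(\gamma_j)}$. First I would record the standard structure of a $2\times 2$ unitary: $|a_j|=|d_j|=:\rho_j$, $|b_j|=|c_j|=\sqrt{1-\rho_j^2}$, together with the column-orthogonality relation $\overline{a_j}b_j+\overline{c_j}d_j=0$. Then I would write $DED^\dagger=(DSD^\dagger)(DCD^\dagger)$, observing that $DCD^\dagger$ is block diagonal with blocks $D_j H_j D_j^\dagger$, while $DSD^\dagger=ST$ for a diagonal operator $T$ whose $L$-entry at site $j$ is $e^{i(\phi_{j-1,L}-\phi_{j,L})}$ and whose $R$-entry at site $j$ is $e^{i(\phi_{j+1,R}-\phi_{j,R})}$.

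The second step is to impose $T_j\,D_jH_jD_j^\dagger=H_j^{(\gamma_j)}$ entry-wise. The diagonal equations read $e^{i(\phi_{j-1,L}-\phi_{j,L})}=\rho_j/a_j$ and $e^{i(\phi_{j+1,R}-\phi_{j,R})}=\rho_j/d_j$, both of which are well-defined phases since $|a_j|=|d_j|=\rho_j$; these are one-step recursions that propagate the $L$-phases toward the origin and the $R$-phases outward, given any initial pair $(\phi_{0,L},\phi_{0,R})$. I would then \emph{define} $\gamma_j$ from the $(R,L)$ off-diagonal equation and verify that the $(L,R)$ equation is automatically satisfied; this reduces to checking $-d_j\overline{c_j}a_j=\rho_j^2 b_j$, which is a one-line consequence of column orthogonality. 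The resulting parameters satisfy $|\gamma_j|=|c_j|<1$ whenever $\rho_j>0$.

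Finally I would dispatch the three cases. For D-QW both recursions propagate freely from $j=0$ to $\pm\infty$ and no boundary adjustment is needed. For H-QW(1) the anomalous action $S\bs{\delta}_{0,L}=\bs{\delta}_{0,R}$ couples the $L$- and $R$-phases at the origin, modifying the form of $T$ at site $0$ and introducing one additional phase constraint; this is the \textbf{main obstacle}, and I would overcome it by exploiting the still-unfixed degree of freedom $\phi_{0,R}$ (or $\phi_{0,L}$) to absorb the coupling, after which the $L$-recursion starts at $j=1$. For H-QW(2), the condition $a_0=d_0=0$ trivializes the diagonal matching at the origin, so the off-diagonal scheme alone determines the phases there, with the boundary coin absorbed into the shift and the sequence $\bs{\gamma}$ understood as indexing the nontrivial coins. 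In each case the outcome is a diagonal unitary $D$ and a parameter sequence $\bs{\gamma}$ with $|\gamma_j|<1$ realizing $E=D^\dagger U^{(\bs{\gamma})}D$.
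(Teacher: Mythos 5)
The paper offers no proof of this lemma: it is stated as imported from the references \cite{CGMV,KS}, so there is nothing internal to compare your argument against. Taken on its own terms, your proof is correct and is the standard gauge-transformation argument. The factorization $DED^\dagger=(DSD^\dagger)(DCD^\dagger)$ with $DSD^\dagger=ST$ is right, the two diagonal matching conditions are solvable first-order phase recursions precisely because $|a_j|=|d_j|=\rho_j$, and the consistency of the two off-diagonal conditions is exactly the identity $-a_j\overline{c_j}d_j=\rho_j^2b_j$, which follows from $\overline{a_j}b_j+\overline{c_j}d_j=0$ upon multiplying by $a_j$. Your treatment of H-QW(1) is also sound: there the $j=0$ condition $e^{i(\phi_{0,R}-\phi_{0,L})}a_0=\rho_0$ replaces the first step of the $L$-recursion, it is absorbed by the relative freedom between the two phase chains, and the remaining off-diagonal relation at the origin ($\gamma_0=\overline{b_0}$ versus $e^{i(\phi_{1,R}-\phi_{0,L})}c_0=-\gamma_0$) is again automatic by unitarity. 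Two small caveats are worth making explicit if you write this up: (i) the whole construction presupposes $\rho_j=|a_j|>0$, which is implicitly required by the lemma's conclusion $|\gamma_j|<1$ but is not stated as a hypothesis on $\{H_j\}_j$; and (ii) for H-QW(2) the origin coin has $a_0=d_0=0$, so it genuinely cannot be brought to the form $H_0^{(\gamma_0)}$ with $|\gamma_0|<1$ --- your resolution (work on the invariant subspace, absorb the unimodular entry $c_0$ into the phases, and let $\bs{\gamma}$ index only the coins at $j\geq 1$) is the right reading of the lemma, and it would be worth saying so in one explicit sentence rather than leaving it as ``understood.''
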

So we get a one-to-one correspondence between each QW and pair of the diagonal matrix $D$ and the parameters $(\gamma_j)_j$. 
From now on, we concentrate on the walks with the time evolution $U^{(\bs{\gamma})}$ for simplicity. 
We call $(\gamma_j)_j$ coin parameter. 

Let $\Xi_n: \mathbb{Z}\to M_2(\mathbb{C})$ be defined by 
\begin{equation}
\Xi_n(j) \equiv 
	\begin{bmatrix}
        \langle \bs{\delta}_{j,L}, {U^{(\bs{\gamma)}}}^n \bs{\delta}_{0,L}\rangle & \langle \bs{\delta}_{j,L}, {U^{(\bs{\gamma)}}}^n \bs{\delta}_{0,R}\rangle\\
        \langle \bs{\delta}_{j,R}, {U^{(\bs{\gamma)}}}^n \bs{\delta}_{0,L}\rangle & \langle \bs{\delta}_{j,R}, {U^{(\bs{\gamma)}}}^n \bs{\delta}_{0,L}\rangle
        \end{bmatrix}
\end{equation}
We put $\Xi_n(j)=0$ for negative $j$ in the cases of H-QW(1) and (2). 
Define 
\begin{equation}
P_j=\begin{bmatrix} \rho_j & \overline{\gamma}_j \\ 0 & 0 \end{bmatrix},\;
Q_j=\begin{bmatrix} 0 & 0 \\ -\gamma_j & \rho_j \end{bmatrix},\;
R_j=\begin{bmatrix} -\gamma_j & \rho_j \\ 0 & 0 \end{bmatrix},\;
S_j=\begin{bmatrix} 0 & 0 \\ \rho_j & \overline{\gamma}_j \end{bmatrix}.
\end{equation}
We call $\Xi_n$ weight of passages with length $n$ in the following sense:  
from a simple observation, we obtain the following recursion equations :
for H-QW(II) and D-QW, 
\begin{equation}
\Xi_n(j) = P_{j+1}\Xi_{n-1}(j+1)+Q_{j-1}\Xi_{n-1}(j-1), 
\end{equation}
and for H-QW(I)
\begin{equation}\label{weightpath}
\Xi_n(j) = 
	\begin{cases}
        P_{j+1}\Xi_{n-1}(j+1)+Q_{j-1}\Xi_{n-1}(j-1) & \text{$j\geq 1$,}\\
        P_{1}\Xi_{n-1}(j+1)+S_{0}\Xi_{n-1}(0) & \text{$j=0$.}
        \end{cases}
\end{equation}

For $z\in \mathbb{C}$ with $|z|\leq 1$, we denote a generating function with respect to time $n$ as 
$\widetilde{\Xi}_j(z)\equiv \sum_{n\geq 0}\Xi_n(j)z^n$. 
Here to express the generating function, 
we define $\widetilde{g}^{(\pm)}_j(z)$ in the following continued-fraction representation: for $j\in \mathbb{Z}$, 
\begin{align} 
	\fp_j(z) &= 
        \begin{cases}
        -\frac{z^2}{\gamma_{j+1}}\left( 1-\frac{\rho_{j+1}^2}{1+\gamma_{j+1}\fp_{j+1}(z)} \right), & \text{: $\gamma_{j+1} \neq 0$} \\
        z^2 \fp_{j+1} & \text{: $\gamma_{j+1}=0$} \label{GFunc+} 
        \end{cases}\\
        \fm_j(z) &= 
        \begin{cases}
        -\frac{z^2}{\overline{\gamma}_{j-1}}\left( 1-\frac{\rho_{j-1}^2}{1-\overline{\gamma}_{j-1}\fm_{j-1}(z)} \right), & \text{: $\gamma_{j-1}\neq 0$, } \\
        z^2\fm_{j-1}(z), & \text{: $\gamma_{j-1}=0. $}\label{GFunc-} 
        \end{cases}
\end{align} 
We call $\fp_j(z)$ (resp.~$\fm_j(z)$) ``positive (resp.~negative) $j$-th g-function" , respectively. 
We should note that for $j\geq 0$, $\fp_j(z)$ only depends on at most parameters $(\gamma_1,\gamma_2,\dots)$ and also
$\fp_j(z)$ only depends on at most parameters $(\gamma_{-1},\gamma_{-2},\dots)$. 
To emphasize its dependences, we sometimes denote
$\fp_j(z) \equiv \widetilde{g}_{j}^{(+),\; (\gamma_1,\gamma_2,\dots)}(z)$ for $j\geq 0$, and 
$\fm_j(z) \equiv \widetilde{g}_{j}^{(-),\; (\gamma_{-1},\gamma_{-2},\dots)}(z)$ for $j\leq 0$. 

Denote $F^{(+)}_n(j)$ (resp. $F^{(-)}_n(j)$) as the weight of all passages which start
from $j$ and return to the same position $j$ at time $n$ avoiding $\{i \in \mathbb{Z} : i<j\}$ (resp.
$\{i \in \mathbb{Z} : i>j\}$) throughout the time interval $0 < s < n$, respectively. 
Indeed, the generating function
$\widetilde{F}^{(\pm)}_j (z) \equiv \sum_{n\geq 1} F^{(\pm)}_j(n)z^n$ can be expressed by using $\fp_j(z)$ and $\fm_j(z)$ 
as follows: 
\begin{align}
\widetilde{F}^{(+)}_j(z) 
	&= \begin{cases}
        \fp_j(z)R_j & \text{: H-QW(2) and D-QW,}\\
        \fp_j(z)R_j+z\delta_0(j) S_0 & \text{: H-QW(1), } 
            \end{cases}   \\ 
\widetilde{F}^{(-)}_j(z) 
	&= \fm_j(z)S_j . 
\end{align}
Then we give an expression for the generating function using $\{\fp_j(z)\}_{j\in \mathbb{N}}$ and $\{\fm_j(z)\}_{j\in \mathbb{N}}$ 
in the following lemma. 
We put $\rho_{-1}=0$ for H-QW(1) and H-QW(2) cases. 
\begin{lemma} \label{acco} \noindent
 	\begin{enumerate}
	\item If $j=0$, then 
	\begin{equation}\label{pero0}
		\widetilde{\Xi}_0(z) 
		= \begin{cases}
                	\frac{1}{1-\overline{\gamma}_0z +(\gamma_0-z)\fp_0(z)}
                        \begin{bmatrix} 1-\overline{\gamma}_0z & \rho_0\fp_0(z) \\ \rho_0 z & 1+\gamma_0\fp_0(z) \end{bmatrix}, & \text{: H-QW(1), } \\
                        
                        \frac{1}{1-\fp_0(z)}
                        \begin{bmatrix} 1 & 0 \\ 0 & 1-\fp_0(z)\end{bmatrix}, & \text{: H-QW(2),} \\
                        
                	\frac{1}{1+\gamma_0\fp_0(z)-\overline{\gamma}_0\fm_0(z)-\fp_0(z)\fm_0(z)}
        		\begin{bmatrix} 1-\overline{\gamma}_0\fm_0(z) & \rho_0\fp_0(z) \\ \rho_0\fm_0(z) & 1+\gamma_0\fp_0(z)\end{bmatrix}, & \text{: D-QW}
                  \end{cases}      
	\end{equation}
        \item if $|j|\geq 1$, then
        \begin{equation} \label{pooh}\\                
		\widetilde{\Xi}_j(z)
		= \begin{cases}
                	\left\{\delta_1(j)+(1-\delta_1(j))\lp_{j-1}(z)\cdots\lp_{1}(z)\right\}
                 		\begin{bmatrix} \lp_{j}(z)\fp_j(z) \\ z\end{bmatrix}
                        	\begin{bmatrix} -\gamma_0 & \rho_0 \end{bmatrix}\widetilde{\Xi}_0(z) & \text{: $j\geq 1$, } \\
                                \\
                 	\left\{\delta_{-1}(j)+(1-\delta_{-1}(j))\lm_{j+1}(z)\cdots\lm_{1}(z)\right\}
                 		\begin{bmatrix} z \\ \lm_{j}(z)\fm_j(z) \end{bmatrix}
                        	\begin{bmatrix} \rho_0 & \overline{\gamma}_0 \end{bmatrix}\widetilde{\Xi}_0(z) & \text{: $j\leq -1$, }       
        	  \end{cases}    
         \end{equation}
         \end{enumerate} 
where $\lp_j(z)=z\rho_j/(1+\gamma_j\fp_j(z))$, $\lm_j(z)=z\rho_j/(1-\overline{\gamma}_j\fm_j(z))$. 
\end{lemma}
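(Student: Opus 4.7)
The plan is to use a first-passage (excursion) decomposition of paths, exploiting the loop generating functions $\widetilde{F}^{(\pm)}_j$ recorded immediately above the statement. The matrix $\widetilde{\Xi}_j(z)$ is the resolvent of $U^{(\bs{\gamma})}$ between the origin and site $j$, so on a line it factors naturally into ``what the walker does at $0$'' composed with ``what it does while travelling out to $j$'', and each of these factors is a geometric series of excursions.

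For $j=0$ I would write an arbitrary loop at the origin as a concatenation of first-return excursions. In the D-QW case each excursion is either a right excursion (avoiding $\{i<0\}$, weight $\fp_0(z)R_0$) or a left excursion (avoiding $\{i>0\}$, weight $\fm_0(z)S_0$), so summing the geometric series gives $\widetilde{\Xi}_0 = (I - \fp_0 R_0 - \fm_0 S_0)^{-1}$; writing $R_0,S_0$ out and inverting the resulting $2\times 2$ matrix yields the D-QW entry of \eqref{pero0}. For H-QW(1) only $\fp_0 R_0$ survives, but the reflection rule $S\bs{\delta}_{0,L}=\bs{\delta}_{0,R}$ contributes an additional loop $zS_0$, which changes the inverse to the H-QW(1) form. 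For H-QW(2) the degeneracy $(H_0)_{LL}=(H_0)_{RR}=0$ combined with $\beta=0$ collapses the matrix inversion to the stated diagonal form.

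For $|j|\geq 1$ (take $j\geq 1$; the other side is symmetric), I would traverse the path sitewise. It performs arbitrary loops at $0$ (the factor $\widetilde{\Xi}_0$), then exits $0$ to the right, which contributes the row $\begin{bmatrix}-\gamma_0 & \rho_0\end{bmatrix}$ coming from the $R$-row of $H_0$; at each intermediate vertex $k=1,\ldots,j-1$ it performs a geometric series of right-side loops (each weighted by $\fp_k$) followed by a single forward step. Summing the loops-then-step series at $k$ gives exactly $\lp_k(z)=z\rho_k/(1+\gamma_k\fp_k(z))$: the forward step contributes $z\rho_k$ and the loop insertions contribute $\sum_{m\geq 0}(-\gamma_k\fp_k(z))^m$. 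Finally at $j$ the walker either arrives in state $R$ directly (weight $z$) or ends in state $L$ after one extra right-excursion (weight $\lp_j\fp_j$), yielding the column $\begin{bmatrix}\lp_j\fp_j \\ z\end{bmatrix}$ and the telescoping product $\lp_{j-1}\cdots\lp_1$.

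The main obstacle is not the combinatorics but verifying that the continued-fraction recursion \eqref{GFunc+}--\eqref{GFunc-} truly defines the first-return generating function at $j$, so that the identity $\widetilde{F}^{(+)}_j=\fp_j R_j$ used above is legitimate. This is obtained by the standard one-step argument: a first-return excursion at $j$ must begin with a forward step to $j+1$, perform arbitrary loops at $j+1$ avoiding $\{i<j+1\}$ (weighted by $\fp_{j+1}$), then step back to $j$; collecting coin factors and summing the geometric series gives precisely the fixed-point equation \eqref{GFunc+}, and the symmetric argument produces \eqref{GFunc-}. With this identification secured, the remainder of the proof is $2\times 2$ algebra and a straightforward telescoping.
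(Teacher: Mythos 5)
Your proposal is correct in substance, but it takes a genuinely different route from the paper. The paper does \emph{not} prove the lemma by an excursion decomposition carried out for all three walks: it omits the D-QW case entirely (citing \cite{KLS}) and then \emph{derives} the two half-line cases from it by parameter substitution --- H-QW(1) is realized as a D-QW on a doubled lattice with coin parameters $\gamma_{2j}=\alpha_j$, $\gamma_{2j-1}=0$ ($j\neq 0$), $\gamma_{-1}=-1$, via the identification $\widetilde{\Xi}_{j,H(1)}(z^2)=\widetilde{\Xi}_{2j,D}(z)$ and $\widetilde{g}^{(-)}_{0,D}(z)=z^2$, while H-QW(2) is obtained by setting $\fm_0=0$ and $\gamma_0=-1$ in the D-QW formula. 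Your direct first-passage argument is self-contained and makes each factor combinatorially transparent; I checked that $(I-\fp_0R_0-\fm_0S_0)^{-1}$ and $(I-\fp_0R_0-zS_0)^{-1}$ do invert to the stated D-QW and H-QW(1) matrices, and that the telescoping $z\begin{bmatrix}-\gamma_k&\rho_k\end{bmatrix}(I-\fp_kR_k)^{-1}\,{}^T\!\begin{bmatrix}0&1\end{bmatrix}=\lp_k(z)$ reproduces \eqref{pooh}. The paper's route buys brevity and the structural insight that the boundary conditions are degenerations of the doubly-infinite walk; yours buys independence from \cite{KLS} and a probabilistic reading of every factor.

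Two caveats. First, the step you yourself flag as the main obstacle does not come out ``precisely'' as you assert: the one-step excursion computation gives $zP_{j+1}(I-\fp_{j+1}R_{j+1})^{-1}zQ_j=z^2\frac{\fp_{j+1}+\overline{\gamma}_{j+1}}{1+\gamma_{j+1}\fp_{j+1}}R_j$, i.e.\ the recursion $\fp_j=+\frac{z^2}{\gamma_{j+1}}\bigl(1-\frac{\rho_{j+1}^2}{1+\gamma_{j+1}\fp_{j+1}}\bigr)$, which is the \emph{negative} of \eqref{GFunc+} as printed. A lowest-order check, $\Xi_2(0)=z^2(\overline{\gamma}_1R_0-\gamma_{-1}S_0)$, shows that the sign your argument produces is the one required for \eqref{pero0} and for Lemma~\ref{bridgeGS}, so the overall minus sign in \eqref{GFunc+} is a typo of the paper rather than an error on your side --- but you must flag the discrepancy instead of asserting agreement, since your identification $\widetilde{F}^{(+)}_j=\fp_jR_j$ is the load-bearing step. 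Second, two harmless imprecisions: the terminal entry $\lp_j\fp_j$ is the resummed weight of a geometric series of right-excursions at $j$ (the resummation hides in the $1/(1+\gamma_j\fp_j)$ inside $\lp_j$), not of ``one extra'' excursion; and in H-QW(2) what collapses the inversion is $\rho_0=0$ together with the absence of left excursions and of the self-loop at the origin, not the initial condition $\beta=0$, which plays no role in the matrix $\widetilde{\Xi}_0$ itself. Neither affects the validity of the argument.
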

\begin{proof}
As consequences of the D-QW case, we obtain the H-QWs (1) and (2) cases as follows: 
We omit the proof of the D-QW because we can see the detailed proof in \cite{KLS}. 
\begin{enumerate}
\item H-QW(1) case:  
Put $\widetilde{\Xi}_{j,D}(z)$ 
as the generating function at position $j$ whose coin parameters are 
$\gamma_{2j}=\alpha_j$, $\gamma_{2j-1}= 0$, $(j\neq 0)$, $=-1$, $(j=0)$ for $j\in \mathbb{Z}$. 
We put its g-functions as $\{\widetilde{g}_{j,D}^{(+)}(z)\}_{j}$ and $\{\widetilde{g}_{j,D}^{(-)}(z)\}_{j}$. 
On the other hand, $\widetilde{\Xi}_{j,H(1)}(z)$ as the generating function at position $j$ 
whose coin parameters are $\gamma_{j}=\alpha_j$ for $j\in \mathbb{Z}_+$. 
We put its g-functions as $\{\widetilde{g}_{j,H(1)}^{(+)}(z)\}_{j}$ and $\{\widetilde{g}_{j,H(1)}^{(-)}(z)\}_{j}$. 
From the definition of H-QW(1), we have 
\begin{equation}\label{HD}
	\widetilde{\Xi}_{j,H(1)}(z^2)=\widetilde{\Xi}_{2j,D}(z),\;\;(j\in \mathbb{Z}_+)
\end{equation}
Note that 
\begin{equation}
	\widetilde{g}_{2j,D}^{(+)}(z)=-\frac{z^4}{\alpha_{j+1}}\left( 1-\frac{|\rho_{j+1}|^2}{1+\alpha_{j+1}\widetilde{g}_{2j+2,D}^{(+)}(z)} \right),
\end{equation}
which implies $\widetilde{g}_{2j,D}^{(+)}(z)=\widetilde{g}_{j,H(1)}^{(+)}(z^2)$. 
Also note that $\widetilde{g}_{0,D}^{(-)}(z)=z^2$. 
Therefore substituting the above expressions of $\widetilde{g}_{0,D}^{(\pm)}(z)$ into Eq.~(\ref{pero0}) for D-QW case and Eq.~(\ref{pooh}), 
we have the generating function for $\gamma_{2j}=\alpha_j$, $\gamma_{2j-1}= 0$, $(j\neq 0)$, $=-1$, $(j=0)$ ($j\in \mathbb{Z}$) case. 
Then from Eq.~(\ref{HD}), we obtain Eq.~(\ref{pero0}) for H-QW(1) case. 
\item H-QW(2) case: 
The definition of the H-QW(2) yields that replacing $\fm_0(z)=0$ and $\gamma_0=-1$ for D-QW case in EQ.~(\ref{pero0})
gives the generating function for H-QW(2) in Eq.~(\ref{pero0}). 
\end{enumerate}
\end{proof}
Combining Lemma \ref{acco} with the spatial Fourier analysis, we obtain the following weak limit theorems 
for H-QW(1)\cite{LiuPeturante}, H-QW(2)\cite{CKS}, and D-QW \cite{Konno} with homogeneous coin parameter 
$\gamma_j=\gamma\in \mathbb{C}$ with $0<|\gamma|<1$. 
\begin{theorem}[\cite{CKS},\cite{LiuPeturante},\cite{Konno}] \label{Weak}
The walks with coin parameter $\alpha_j=\gamma$ start at the origin with the initial coin states ${}^T[\alpha,\beta]$ (for H-QW(1), and D-QW), and 
${}^T[1,0]$ (H-QW(2)), respectively. 
Then we have 
\begin{equation}\label{WeakConv}
	\frac{X_n}{n} \Rightarrow c\delta_0(x) +w(x)f_K(x;\rho)\;\;(n\to\infty), 
\end{equation}
where 
\begin{align}
 c= &
        \begin{cases}
        \frac{\mathrm{Re}^2(\gamma)}{1-\mathrm{Im}^2(\gamma)}
        	\left| \alpha +\beta\nu(\gamma) \right|^2 \frac{1+\nu^2(\gamma)}{1-\nu^2(\gamma)} & \text{: H-QW(1)} \\
                \\
        \bs{1}_{\{|\gamma|^2+\mathrm{Re}(\gamma)>0\}}(\gamma) \frac{2(|\gamma|^2+\mathrm{Re}(\gamma))}{|1+\gamma|^2} & \text{: H-QW(2)} \\
        \\
        0 & \text{: D-QW}        
        \end{cases} \\
 w(x)= & 
 	\begin{cases}
        \bs{1}_{\{x\geq 0\}}(x)\frac{2|\gamma|^3/\rho^2\cdot \left\{ |\alpha|^2+|-\gamma\alpha+\rho\beta|^2 +2\rho \mathrm{Im}(\gamma)\mathrm{Im}(-\gamma\alpha+\rho\beta)\right\}x^2}
        	{\mathrm{Re}^2(\gamma)+\mathrm{Im}^2(\gamma)x^2} & \text{: H-QW(1)} \\
                \\
        \bs{1}_{\{x\geq 0\}}(x)\frac{2|\gamma|^2 (1+\mathrm{Re}(\gamma))x^2}{\left(|\gamma|^2+\mathrm{Re}(\gamma)\right)^2+\mathrm{Im}^2(\gamma)x^2} & \text{: H-QW(2)} \label{weight}\\
        \\
        1-\left(|\alpha|^2-|\beta|^2+2\mathrm{Re}(\gamma\alpha\overline{\beta})/\rho \right) x & \text{: D-QW}        
        \end{cases}       
\end{align}
Here $\rho=\sqrt{1-|\gamma|^2}$, and  $\nu(\gamma)=\mathrm{sgn}(\mathrm{Re}(\gamma))/\rho \cdot \left\{ \sqrt{1-\mathrm{Im}^2(\gamma)}-|\mathrm{Re}(\gamma)| \right\}$. 
\end{theorem}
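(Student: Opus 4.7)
The plan is to specialize Lemma~\ref{acco} to the homogeneous case $\gamma_j\equiv\gamma$ and then invert the time-generating function by a contour/saddle-point argument. Since (\ref{GFunc+})–(\ref{GFunc-}) are self-consistent when the coin is position-independent, $\fp_j(z)$ and $\fm_j(z)$ both collapse to a single function $g(z)$ satisfying the quadratic
\[ g(z)=-\frac{z^2}{\gamma}\left(1-\frac{\rho^2}{1+\gamma g(z)}\right), \]
and the auxiliary ratios $\lp_j(z),\lm_j(z)$ collapse to a single $\lambda(z)=z\rho/(1+\gamma g(z))$. First I would pick the analytic branch with $g(0)=0$ to obtain closed forms for $g$ and $\lambda$.

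Next, substituting into (\ref{pero0})–(\ref{pooh}) with the appropriate boundary prescription for each model (for H-QW(1) the proof of Lemma~\ref{acco} gives $\fm_0(z)=z^2$; for H-QW(2) one sets $\fm_0(z)=0$ and $\gamma_0=-1$; for D-QW both $\fp_0$ and $\fm_0$ are $g(z)$) produces explicit rational expressions $\widetilde\Xi_0(z)$ and
\[ \widetilde\Xi_j(z)=\lambda(z)^{|j|-1}M_j(z)\widetilde\Xi_0(z) \quad (|j|\geq 1), \]
where $M_j$ is an explicit $2\times 2$ matrix depending only on the sign of $j$. From here, $\Xi_n(j)$ is recovered by the contour integral $\frac{1}{2\pi\im}\oint \widetilde\Xi_j(z)\,z^{-n-1}\,dz$ on a small circle around the origin.

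The third step is the asymptotic evaluation as $n\to\infty$ with $j=xn$ fixed. Deforming the contour toward $|z|=1$ and setting $z=e^{\im\theta}$, the dominant factor becomes $\exp\!\bigl[n\bigl(x\log\lambda(e^{\im\theta})-\im\theta\bigr)\bigr]$, and stationary phase in $\theta$ reproduces precisely the Konno density $f_K(x;\rho)$ in (\ref{fK}) on the support $|x|<\rho$; the matrix prefactor, evaluated at the stationary point, yields the weight $w(x)$ of (\ref{weight}). Poles of $\widetilde\Xi_j(z)$ lying strictly inside $|z|<1$ that are crossed during the deformation contribute a term of size $O(|\lambda(z_*)|^{|j|})$ with $|\lambda(z_*)|<1$; because this decays geometrically in $|j|$ while carrying positive mass, it collapses to a delta at $x=0$ under the $X_n/n$ scaling, producing the $c\,\delta_0$ term.

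The main obstacle is the branch and residue bookkeeping in the last step: one must choose the square-root branch of $g(z)$ so that the oscillatory contribution (giving $f_K$) and the pole contribution (giving $c\,\delta_0$) are cleanly separated on $|z|=1$, verify that the continuous part is supported on $|x|<\rho$ with the correct sign of $w$, and, in the half-line cases, decide which boundary-induced poles actually lie inside $|z|=1$ as $\gamma$ varies. That last point is exactly what forces the indicator $\bs{1}_{\{|\gamma|^2+\mathrm{Re}(\gamma)>0\}}$ in the H-QW(2) mass and the sign factor in $\nu(\gamma)$ for H-QW(1); once these choices are fixed, matching with the explicit formulas for $c$ and $w$ becomes algebraic bookkeeping.
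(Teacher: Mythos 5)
Your overall architecture --- specialize Lemma~\ref{acco} to a constant coin, invert the time generating function by a contour integral, and split the asymptotics into a stationary-phase part on $|z|=1$ (producing $w(x)f_K(x;\rho)$, with $f_K$ arising as the Jacobian of $x=\partial\theta/\partial k$) and a point-mass part (producing $c\,\delta_0$) --- is exactly the route the paper indicates, namely Lemma~\ref{acco} combined with spatial Fourier/singularity analysis, with the boundary prescriptions for the three models read off from the proof of that lemma. One small algebraic correction: in the homogeneous case $\fp_j$ and $\fm_j$ do \emph{not} collapse to the same function. Each is constant in $j$, but they satisfy \emph{different} quadratics ($\fm=-\frac{z^2}{\overline{\gamma}}\bigl(1-\frac{\rho^2}{1-\overline{\gamma}\fm}\bigr)$; note the conjugate and the sign), and likewise $\lp\neq\lm$ in general. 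This matters for the D-QW entries of (\ref{pero0}).

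The substantive problem is where you locate the singularities that produce the atom at $x=0$. Since the evolution is unitary, every entry of $\Xi_n(j)$ is bounded by $1$, so $\widetilde{\Xi}_j(z)$ is analytic on all of $|z|<1$: there are \emph{no} poles strictly inside the unit disc to be crossed when you deform the contour outward (a pole at $|z_*|<1$ would force $\Xi_n(j)\sim z_*^{-n}$, which is impossible). As written, your mechanism for generating $c\,\delta_0$ picks up nothing and the atom is lost. The point masses of the spectral measure appear as poles of $\widetilde{\Xi}_j$ \emph{on} $|z|=1$ approached radially from inside, exactly as in Eq.~(\ref{mass}); the correct procedure is to push the contour to $|z|=1-\epsilon$, isolate the contribution of the arcs near those boundary poles --- which is $O(1)$ in $n$ and decays geometrically in $j$ because $|\lambda(z_*)|<1$ there, hence collapses to $\delta_0$ under the $X_n/n$ scaling --- and treat the remainder by stationary phase. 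Correspondingly, the dichotomy that produces $\bs{1}_{\{|\gamma|^2+\mathrm{Re}(\gamma)>0\}}$ is not ``inside versus outside $|z|=1$'' but whether the candidate zero of the denominator lies on the unit circle on the physical branch of $g$ (equivalently, whether the Carath\'eodory function has a genuine mass point there). With these repairs the rest of your bookkeeping goes through and matches the stated $c$ and $w$.
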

The RHS of the first term in Eq.~(\ref{WeakConv}) provides localization (see Theorem.~\ref{homoloc}), and 
the second term provides the ballistic transport. 
The shape of the weight function $w(x)$ in Eq.~(\ref{weight}) depends on the boundary condition and initial state. 
On the other hand, we see $f_K(s;\rho)$ commonly for each case. 
In these cases, the $f_K(x;\rho)$ defined in Eq.~(\ref{fK}) appears 
as the Jacobian for the change of the variables $x=\partial \theta(k)/\partial k$, i.e., 
\begin{equation}  f_K(x;\rho)=\frac{1}{\pi\frac{\partial^2 \theta(k)}{\partial k^2}}\bigg|_{x=\partial \theta(k)/\partial k}, \end{equation}
where $\theta(k)$ is the argument of the singular point for the Fourier transform of the generating function. 
See Refs.~\cite{KLS,LiuPeturante}, for example, for more detailed computations aroud here.  
\section{Relation between spectral analysis of CMV matrix and generating function}
Let $L^2_\mu$ be the Hilbert space of $\mu$-square integrable functions whose inner product is 
\[ (f,g)_\mu=\int_{|z|=1}\overline{f(z)}g(z)d\mu,\;f,g\in L^2_\mu. \]
Let $U$ be a unitary time evolution on $\mathcal{H}$. 
We consider a complete orthogonal basis system of $L^2_\mu$ from the order-set $\{1, z^{-1},z, z^{-2},z^{2},\dots\}$.  
Put $\{\chi_j\}_{j}$ as the orthogonal basis system. 
Then $\chi_0(z) = 1$, 
\begin{align} 
\chi_{2j-1}(z) &\in \mathrm{span}\{1,z^{-1},z^1,\dots,z^{-j}\},\\
\chi_{2j}(z) &\in \mathrm{span}\{1,z^{-1},z^1,\dots,z^{-j},z^j\}.
\end{align}
Thus we obtain the following integral representation: 
\[ (\chi_m,z^n\chi_l)_\mu=\int_{|z|=1} z^n \overline{\chi_m(z)}\chi_l(z)d\mu(z) \]
In general, there exists a complex-valued sequence called Schur parameter $(\gamma_0,\gamma_1,\dots)$ with $|\gamma_j|<1$ which denotes five recursion relation 
between $\{\chi_m\}_m$, that is,  $(\mathcal{C})_{l,m}=(\chi_l,z\chi_m)_{\mu}$: 

\begin{equation*}
\mathcal{C}=
\begin{bmatrix}
\overline{\gamma}_0      & \rho_0\overline{\gamma}_1    & \rho_0\rho_1  & 0               & 0             & 0               & 0            & 0      & \ldots \\
\rho_0         & -\gamma_0\overline{a}_1      & -\gamma_0\rho_1    & 0               & 0             & 0                & 0             & 0       & \ldots \\
0              & \rho_1\overline{\gamma}_2    & -a_1\overline{\gamma}_2 & \rho_2\overline{\gamma}_3 & \rho_2\rho_3  & 0                & 0             & 0       & \ldots \\  
0              & \rho_1\rho_2       & -\gamma_1\rho_2    & -\gamma_2\overline{\gamma}_3   & -\gamma_2\rho_3    & 0               & 0            & 0       & \ldots \\
0              & 0                  & 0             & \rho_3\overline{\gamma}_4 & -\gamma_3\overline{\gamma}_4 & \rho_4\overline{\gamma}_5 & \rho_4\rho_5 & 0       & \ldots \\
0               & 0                   & 0              & \rho_3\rho_4    & -\gamma_3\rho_4    & -\gamma_4\overline{\gamma}_5   & -\gamma_4\rho_5   & 0      & \ldots  \\
               &\vdots              &\vdots         &\vdots           & \vdots        & \vdots          & \ddots       &  &         
\end{bmatrix},
\end{equation*}
where $\rho_j=\sqrt{1-|\gamma_j|^2}$. $\mathcal{C}$ is called the CMV matrix. 
Conversely, if we get $(\gamma_0,\gamma_1,\dots)$, the measure on the unit circle is uniquely determined. 
In fact, the following procedure is a standard method to get the measure from the Schur parameter. 
Let $f(z)$ be the Schur function with the Schur parameter $(\gamma_0,\gamma_1,\dots)$. 
The Schur function is obtained by the following continuum fractional expression known as Schur algorithm: 
\begin{equation}\label{SchurFunc}
f_0(z) = f(z);\; f_{j+1}(z) = \frac{1}{z}\frac{f_j (z)-\gamma_j}{1-\overline{\gamma_j}f_j(z)}, \;j\geq  0.
\end{equation}
To emphasize the dependence of $\gamma_j$'s, we express $f_j$ as $f_j^{(\gamma_0,\gamma_1,\dots)}$. 
The Caratheodory function of the Schur parameter $(\gamma_0,\gamma_1,\cdots)$ are defined by 
\begin{equation}\label{Carathe}
F(z)=\frac{1+zf(z)}{1-zf(z)}.
\end{equation}
An equivalent expression for $F(z)$ is 
\[ F(z)=\int_{|z|=1}\frac{t+z}{t-z}d\mu(t).  \]
The measure $d\mu$ is decomposed into $d\mu(e^{i\theta})=w(e^{i\theta})\frac{d\theta}{2\pi}+d\mu_s(e^{i\theta})$, 
where $w(e^{i\theta})$ is absolutely continuous part called weight, and $\mu_s$ is singular part. 
The weight can be obtained by
\begin{equation}\label{ac}
w(e^{i\theta})=\lim_{r\uparrow  1} \mathrm{Re}(F(re^{i\theta}))
\end{equation}
The singular points are concentrated on 
$\{e^{i\theta}: \lim_{r\uparrow  1} \mathrm{Re}(F(r(e^{i\theta})))=\infty\}$, moreover 
the mass points are given by 
\begin{equation}\label{mass}
\mu(\{e^{i\theta}\})=\lim_{r\uparrow  1} \frac{1-r}{2}F(re^{i\theta}).
\end{equation}
Now we present an expression of the CMV matrix by using second kind of QW. 
Let $\mathcal{H}_e\equiv \mathrm{span}\{\{\bs{\delta}_{0,L}\}\cup\{\bs{\delta}_{2j,R},\bs{\delta}_{2j,L}; j\geq 1\}\}\subset \mathcal{H}_+$, 
$\mathcal{H}_o\equiv \mathrm{span}\{\bs{\delta}_{2j+1,R},\bs{\delta}_{2j+1,L}; j\geq 0\}\subset\mathcal{H}_+$.  
Then we obtain a relation between QW and the CMV matrix: 

\begin{proposition}
Let $U$ be the time evolution of second kind of QW on half line. 
\begin{align}
\mathcal{C}\cong U^2 |_{\mathcal{H}_e},\;{}^{T}\mathcal{C}\cong U^2 |_{\mathcal{H}_o}. 
\end{align}
\end{proposition}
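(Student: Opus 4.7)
The strategy is a direct matrix verification. The key structural observation is that, since $S$ shifts $L$-sector states left and $R$-sector states right by one while $C$ preserves position, $U=SC$ sends each vertex-basis vector at an even position to a linear combination of basis vectors at odd positions, and vice versa. The H-QW(2) boundary condition $\rho_0=0$ (equivalently $|\gamma_0|=1$) forces $\bs{\delta}_{0,L}\mapsto\bs{\delta}_{1,R}$, so this swap is exact: $U(\mathcal{H}_e)\subset\mathcal{H}_o$ and $U(\mathcal{H}_o)\subset\mathcal{H}_e$. Hence $U^2$ restricts to unitaries on each subspace, and it remains to compute their matrix representations in a suitable ordered basis.

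First I would order $\mathcal{H}_e$ as $(\bs{\delta}_{0,L},\bs{\delta}_{2,R},\bs{\delta}_{2,L},\bs{\delta}_{4,R},\bs{\delta}_{4,L},\ldots)$ and compute $U^2$ on each basis vector using the Schur-type coin $C\bs{\delta}_{j,L}=\rho_j\bs{\delta}_{j,L}-\gamma_j\bs{\delta}_{j,R}$, $C\bs{\delta}_{j,R}=\overline{\gamma}_j\bs{\delta}_{j,L}+\rho_j\bs{\delta}_{j,R}$ together with the shift $S$. A short calculation (with $\gamma_0=-1$) gives, for instance, $U^2\bs{\delta}_{0,L}=\overline{\gamma}_1\bs{\delta}_{0,L}+\rho_1\bs{\delta}_{2,R}$, and analogous four-term expansions for $U^2\bs{\delta}_{2j,R}$ and $U^2\bs{\delta}_{2j,L}$ ($j\geq 1$). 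Under the identification of Schur parameters $\gamma_k^{\mathrm{CMV}}\leftrightarrow\gamma_{k+1}^{\mathrm{QW}}$, the coefficients that appear are precisely the nonzero entries in the corresponding columns of $\mathcal{C}$ as displayed above, yielding $\mathcal{C}\cong U^2|_{\mathcal{H}_e}$.

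For $\mathcal{H}_o$ I would use the ordering $(\bs{\delta}_{1,R},\bs{\delta}_{1,L},\bs{\delta}_{3,R},\bs{\delta}_{3,L},\ldots)$ and repeat the calculation. Relative to the even-subspace basis the roles of $R$ and $L$ are swapped; at each site this interchanges the matrix elements $a_j\leftrightarrow d_j$ and $b_j\leftrightarrow c_j$, i.e.\ replaces $H_j$ by ${}^TH_j$, and simultaneously reverses the effective shift direction (what was the "leftward" $L$-channel now becomes the "rightward" channel in the ordered basis). These two reversals combine to transpose the CMV block pattern, producing ${}^T\mathcal{C}\cong U^2|_{\mathcal{H}_o}$.

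The main obstacle is purely bookkeeping: aligning the orderings of the vertex bases of $\mathcal{H}_e$ and $\mathcal{H}_o$ with the CMV ordering of $\{\chi_j\}$, and keeping track of the index shift between the CMV sequence $(\gamma_0,\gamma_1,\ldots)$ and the QW's coin parameters $(\gamma_0^{\mathrm{QW}}=-1,\gamma_1^{\mathrm{QW}},\gamma_2^{\mathrm{QW}},\ldots)$, together with the $L/R$ swap that generates the transpose. Once these conventions are pinned down the coefficient matching is mechanical, and no appeal to the spectral theorem or to the defining relation $(\mathcal{C})_{l,m}=(\chi_l,z\chi_m)_\mu$ is needed beyond reading off the CMV entries already written down.
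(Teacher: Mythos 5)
Your proposal is correct, but it organizes the verification differently from the paper. The paper does not compute $U^2$ at all: it starts from the standard factorization $\mathcal{C}=\mathcal{L}\mathcal{M}$ into the two block-diagonal unitaries $\mathcal{L}=\mathrm{diag}(\Theta_0,\Theta_2,\dots)$ and $\mathcal{M}=\mathrm{diag}(1,\Theta_1,\Theta_3,\dots)$ built from the symmetric $2\times 2$ blocks $\Theta_j$, and identifies each factor with a \emph{single} application of $U$ after relabeling: $\mathcal{M}=U|_{\mathcal{H}_e}$ as a map $\mathcal{H}_e\to\mathcal{H}_o$ and $\mathcal{L}=U|_{\mathcal{H}_o}$ as a map $\mathcal{H}_o\to\mathcal{H}_e$. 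Both claims then follow formally: $U^2|_{\mathcal{H}_e}=\mathcal{L}\mathcal{M}=\mathcal{C}$, and $U^2|_{\mathcal{H}_o}=\mathcal{M}\mathcal{L}={}^T\mathcal{M}\,{}^T\mathcal{L}$ hmm, rather ${}^T(\mathcal{L}\mathcal{M})={}^T\mathcal{M}\,{}^T\mathcal{L}=\mathcal{M}\mathcal{L}$ since each $\Theta_j$ is symmetric, so the transpose statement is automatic. Your route instead multiplies out $U^2$ on each basis vector and matches the five-diagonal entries of $\mathcal{C}$ column by column; your sample computation $U^2\bs{\delta}_{0,L}=\overline{\gamma}_1\bs{\delta}_{0,L}+\rho_1\bs{\delta}_{2,R}$ and the index shift $\gamma_k^{\mathrm{CMV}}\leftrightarrow\gamma_{k+1}^{\mathrm{QW}}$ are both consistent with the paper's conventions, so the matching does go through. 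What the paper's factorization buys is that all bookkeeping is confined to identifying $2\times 2$ blocks of a single step of $U$, and the $\mathcal{H}_o$ case plus the transposition come for free; what your direct route buys is independence from prior knowledge of the $\mathcal{L}\mathcal{M}$ decomposition of CMV matrices, at the cost of a second full pass over $\mathcal{H}_o$ and an $L/R$-swap argument for the transpose that, as written, is heuristic and would need to be replaced by (or reduced to) the explicit coefficient check you already performed for $\mathcal{H}_e$. One small refinement: the reason $U$ exactly interchanges $\mathcal{H}_e$ and $\mathcal{H}_o$ is not only the even/odd parity of the shift but also that $\rho_0=0$ prevents leakage toward the (nonexistent) site $-1$ and that $\bs{\delta}_{0,R}$, which is excluded from $\mathcal{H}_e$, is never reached from $\mathcal{H}_o$; your invocation of $\rho_0=0$ is the right ingredient, but the parity statement alone does not cover the exclusion of $\bs{\delta}_{0,R}$.
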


\begin{proof}
The CMV matrix $\mathcal{C}$ is decomposed into $\mathcal{C}=\mathcal{L}\mathcal{M}$, where
\begin{align}
	\mathcal{L} = \mathrm{diag}(\Theta_0,\Theta_2,\Theta_4,\dots), \;
        \mathcal{M} = \mathrm{diag}(1,\Theta_3,\Theta_5,\dots), \;\; 
        \mathrm{with}\;\; \Theta_j = \begin{bmatrix} \overline{\gamma}_j & \rho_j \\ \rho_j & -{\gamma}_j \end{bmatrix}. 
\end{align}
We interpret $\mathcal{M}$ as $\mathcal{M}: \mathcal{H}_e\to \mathcal{H}_o$ 
by relabeling raws and columns of $\mathcal{M}$ as \\
$(1,R),(1,L),(3,R),(3,L),\dots$ and $(0,L),(2,R),(2,L),(4,R),\dots$, 
respectively. 
A simple observation gives $\mathcal{M}=U|_{\mathcal{H}_e}$. 
Conversely, relabeling raws and columns of $\mathcal{L}$ as \\ $(0,L),(2,R),(2,L),(4,R),\dots$ as $(1,R),(1,L),(3,R),(3,L),\dots$, respectively, 
we obtain $\mathcal{L}=U|_{\mathcal{H}_o}$. 
So we complete the proof. 
\end{proof}
The existence of mass points, i.e., ``eigenvalues" in the spectrum of the corresponding CMV matrix 
ensures the localization of QWs. 
In fact, the limit measure of the localization is obtained by the orthogonal projection onto the eigenspaces of the initial state. 
We give the following theorem with respect to localization for a space homogeneous case whose parameters are $\gamma_j=\gamma$. 
\begin{theorem}[\cite{KS}]\label{homoloc}
The walks start at the origin with the initial coin state ${}^T[\alpha,\beta]$ (for H-QW(1)) and ${}^T[1,0]$ (for H-QW(2)), 
respectively. 
\begin{equation*}
\lim_{t\to\infty}P(X_t^{(I)}=j)
=\frac{|\mathrm{Re}(\gamma)|^2}{1-\mathrm{Im}(\gamma)^2}
|\alpha+\beta\nu_I(\gamma)|^2(1+\nu_I^2(\gamma))\nu_I^{2j}(\gamma),
\end{equation*}
\begin{equation*}
P(X^{(II)}_t=j)
\sim \bs{1}_{\{|\gamma|^2+\mathrm{Re}(\gamma)>0\}}(\gamma) \frac{1+(-1)^{n+j}}{2}
\left(\frac{|\gamma|^2+\mathrm{Re}(\gamma)}{(1+\gamma)^2}\right)^2
\left(1+\frac{\bs{1}_{\{j\geq 1\}}(j)}{\nu_{II}^2(\gamma)} \right) \nu_{II}^{2j}(\gamma), 
\end{equation*}
where
\begin{align*}
\nu_I(\gamma) =\frac{\mathrm{sgn} (\mathrm{Re}(\gamma))}{\rho} \left\{ \sqrt{1-\mathrm{Im}(\gamma)^2}-|\mathrm{Re}(\gamma)| \right\} 
\;\mathrm{and}\;
\nu_{II}(\gamma) = \frac{\rho}{|1+\gamma|}. 
\end{align*}
\end{theorem}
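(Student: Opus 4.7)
My plan is to extract the localization probabilities from the poles of the generating function $\widetilde{\Xi}_j(z)$ on the unit circle. Spatial homogeneity $\gamma_j\equiv\gamma$ and the self-similarity of the continued fraction (\ref{GFunc+}) force $\fp_j(z)\equiv\fp(z)$, and (\ref{GFunc+}) then collapses to the quadratic
\[
\gamma\,\fp(z)^2 + (1+z^2)\,\fp(z) + \overline{\gamma}\,z^2 \;=\; 0.
\]
I would solve for $\fp(z)$ by choosing the branch with $|\fp(z)|\le 1$ on the closed unit disk, which in turn makes $\lp(z)=z\rho/(1+\gamma\fp(z))$ a fixed function independent of $j$. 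Substituting into Lemma~\ref{acco} gives a fully explicit closed form for $\widetilde{\Xi}_j(z)$; in particular, for $j\ge 1$ the $j$-dependence is concentrated in the single factor $\lp(z)^{j-1}$, which is precisely the structure that will lead to the $\nu^{2j}$ geometric decay.

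\textbf{Residue extraction on the unit circle.} The limit (resp.\ asymptotic) of $P(X_n=j)$ is controlled by the boundary poles of $\widetilde{\Xi}_j(z)$: the absolutely continuous spectral part contributes a Riemann--Lebesgue decaying oscillatory integral, whereas each simple pole $z_0\in\partial\mathbb{D}$ yields a persistent oscillating contribution whose modulus survives in $\Xi_n(j){}^{T}[\alpha,\beta]$. For H-QW(1) the poles arise as zeros of $1-\overline{\gamma}z+(\gamma-z)\fp(z)$, and for H-QW(2) as zeros of $1-\fp(z)$. Using the quadratic for $\fp$ together with $|z_0|=1$ one solves for $z_0$ explicitly; the key identity to verify is $\lp(z_0)^{2}=\nu_I(\gamma)^{2}$ (resp.~$\nu_{II}(\gamma)^{2}$), which combined with $\lp(z)^{j-1}$ gives the declared $\nu^{2j}$ factor. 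In the H-QW(2) case, the boundary pole exists precisely when $|\gamma|^2+\mathrm{Re}(\gamma)>0$, which yields the indicator function in the formula. Furthermore, the step operator of H-QW(2) strictly changes the position by $\pm 1$ at every step, so $P(X_n^{(II)}=j)$ vanishes unless $n\equiv j\pmod 2$; spectrally, this parity appears because $U^{2}|_{\mathcal H_e}\cong\mathcal C$ (the preceding proposition), so eigenvalues of $U$ split into $\pm$ pairs whose interference produces the factor $(1+(-1)^{n+j})/2$, preventing convergence of $P(X_n^{(II)}=j)$ and forcing the ``$\sim$'' notation. By contrast, the reflection at the origin in H-QW(1) destroys strict position parity, so a genuine limit of $P(X_n^{(I)}=j)$ exists.

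\textbf{Main obstacle.} The main technical burden is algebraic bookkeeping: after differentiating the denominator at $z=z_0$, contracting the matrix residue with $[-\gamma,\ \rho]\,\widetilde{\Xi}_0(z_0)$, and taking the squared modulus against ${}^{T}[\alpha,\beta]$, one must reduce the result to the remarkably compact form $\mathrm{Re}(\gamma)^2(1-\mathrm{Im}^2(\gamma))^{-1}|\alpha+\beta\nu_I(\gamma)|^2(1+\nu_I^2(\gamma))\nu_I^{2j}$, and likewise for H-QW(2). A subsidiary subtlety is to rule out spurious poles coming from zeros of $1+\gamma\fp(z)$ hidden inside $\lp(z)$, and to verify that the chosen branch of $\fp$ remains bounded up to the boundary so that the boundary-pole residue analysis is justified; the H-QW(2) case also requires care in the special boundary regime $|\gamma|^2+\mathrm{Re}(\gamma)=0$ where the pole leaves the unit circle and localization disappears.
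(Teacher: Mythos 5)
Your strategy is sound, but it is worth being explicit that the paper itself contains no proof of Theorem~\ref{homoloc}: it is imported verbatim from \cite{KS}, whose stated method is the CGMV/spectral one --- compute the Carath\'eodory function $F$ of the associated Schur parameters, locate the mass points of the spectral measure via Eq.~(\ref{mass}), and obtain the limit of $P(X_n=j)$ as the squared overlap of $\bs{\delta}_{j,\cdot}$ with the projection of the initial state onto the point spectrum. Your route --- solve the self-similar continued fraction (\ref{GFunc+}) for the homogeneous fixed point $\fp$, plug into Lemma~\ref{acco}, and read off the surviving contribution from the simple poles of $\widetilde{\Xi}_j(z)$ on $\partial\mathbb{D}$ --- is genuinely different from \cite{KS}, but it is exactly the technique this paper deploys in Section~4 (the contour integral $\Xi_j(n)=\frac{1}{2\pi i}\int \widetilde{\Xi}_j(z)z^{-n-1}dz$ and the residue split into $L$, $B$, $I$ of Lemma~\ref{decom}); Lemma~\ref{bridgeGS} is precisely the dictionary showing the two are equivalent (boundary poles of the generating function $\leftrightarrow$ mass points of $\mu$). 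What the spectral route buys is a clean existence statement for the mass point and its weight; what your route buys is the explicit $j$-dependence (the factor $\lp(z_0)^{j-1}$ giving $\nu^{2j}$), the parity oscillation $(1+(-1)^{n+j})/2$ from the pole pair $\pm z_0$ in the type-(2) case, and the reason a genuine (non-Ces\`aro) limit exists for type (1), namely that the denominator $1-\overline{\gamma}z+(\gamma-z)\fp(z)$ is not even in $z$, so its boundary zeros do not come in $\pm$ pairs. Your structural identifications (the indicator $|\gamma|^2+\mathrm{Re}(\gamma)>0$ as the branch condition for $\fp(z_0)=1$, and $|\lp(z_0)|=\nu$) check out.

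Two cautions. First, your quadratic $\gamma\fp^2+(1+z^2)\fp+\overline{\gamma}z^2=0$ is what a literal reading of Eq.~(\ref{GFunc+}) gives, but it is inconsistent with Lemma~\ref{bridgeGS} and with Eq.~(\ref{gj}): the substitution $\fp(z)=z^2f(z^2)$ with $f$ the Schur fixed point yields $\gamma\fp^2+(1-z^2)\fp-\overline{\gamma}z^2=0$, and a direct check of (\ref{GFunc+}) against the worked example (\ref{gj}) shows the printed recursion is off by a sign. The moduli $|z_0|$ and $|\lp(z_0)|$ are insensitive to this, but the residue algebra and the location of $z_0$ are not, so fix the convention against Lemma~\ref{acco} before computing. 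Second, the reduction of the residue at $z_0$, contracted with $[-\gamma,\ \rho]\widetilde{\Xi}_0(z_0)$ and squared against ${}^T[\alpha,\beta]$, to the closed forms involving $\nu_I,\nu_{II}$ is the entire content of the theorem and is not carried out; as a proposal the skeleton is right, but nothing short of that computation verifies the stated constants.
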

We should remark that the summation over the positions is strictly smaller than one. 
The missing value is characterized by Theorem.~\ref{Weak}. 

In the following, we propose a useful connection 
between spectral analysis of the CMV matrix and generating function of QWs. 
The derivation of the above lemma is due to just a simple comparing between Eqs.~(\ref{GFunc+}) and (\ref{SchurFunc}). 
\begin{lemma}\label{bridgeGS}
\begin{align}
	\widetilde{g}_{j}^{(+),\; (\gamma_1,\gamma_2,\dots)}(z) &= z^2{f}_j^{(\overline{\gamma}_1,\overline{\gamma}_2,\dots)}(z^2) & \text{: $j\geq 0$,} \\
	\widetilde{g}_{j}^{(-),\; (\gamma_{-1},\gamma_{-2},\dots)}(z) &= z^2{f}_{|j|}^{(-\gamma_{-1},-\gamma_{-2},\dots)}(z^2) & \text{: $j\leq 0$,}
\end{align}
\end{lemma}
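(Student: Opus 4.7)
The plan is to verify both identities by directly comparing the continued-fraction recursions (\ref{GFunc+})--(\ref{GFunc-}) defining $\fp_j$ and $\fm_j$ with the Schur algorithm (\ref{SchurFunc}), and then invoking the uniqueness of the sequence determined by such a recursion. First I would put each of the recursions into a common Möbius form. Expanding the right-hand side of (\ref{GFunc+}) and using $\rho_{j+1}^2=1-|\gamma_{j+1}|^2$ reduces it, after a short calculation, to
\[
\fp_j(z)=-z^2\,\frac{\overline{\gamma}_{j+1}+\fp_{j+1}(z)}{1+\gamma_{j+1}\fp_{j+1}(z)},\qquad\gamma_{j+1}\neq 0,
\]
and an analogous rearrangement handles (\ref{GFunc-}). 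On the Schur side, solving (\ref{SchurFunc}) for $f_j$ in terms of $f_{j+1}$ gives the inverse Schur form $f_j(w)=(\gamma_j+w\,f_{j+1}(w))/(1+\overline{\gamma}_j w\,f_{j+1}(w))$.

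Next I would substitute $w=z^2$ and specialize the Schur parameters. For the $(+)$ identity I would take the parameter sequence $(\overline{\gamma}_1,\overline{\gamma}_2,\dots)$, so the parameter used at the $j$-th step of the inverse Schur recursion is $\overline{\gamma}_{j+1}$; a direct comparison then shows that $\fp_j(z)$ and $z^2 f_j^{(\overline{\gamma}_1,\overline{\gamma}_2,\dots)}(z^2)$ satisfy the same Möbius recursion. For the $(-)$ identity I would do the same with parameter sequence $(-\gamma_{-1},-\gamma_{-2},\dots)$: the negation of the parameters precisely compensates the structural difference between (\ref{GFunc-}) (which carries $-\overline{\gamma}_{j-1}\fm_{j-1}$ in its denominator) and the inverse Schur form, while the reflection $j\mapsto |j|$ brings the negative-side indexing of $\fm$ into agreement with the nonnegative Schur indexing.

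The degenerate case $\gamma_{j+1}=0$ (and likewise $\gamma_{j-1}=0$) must be checked separately, but both recursions there reduce to the trivial shift $\fp_j=z^2\fp_{j+1}$ and $f_j(w)=w f_{j+1}(w)$, so the identification persists. Finally, because both sides are analytic in $|z|<1$ and satisfy the same Möbius recursion with the same parameter sequence, they must agree by the standard uniqueness of the continued-fraction representation determined by a Schur parameter sequence.

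The main obstacle is not conceptual but combinatorial: carefully tracking the conjugations, sign flips, and index shifts between the $\gamma$-indexing of the g-functions (which uses $\gamma_{j+1},\overline{\gamma}_{j+1}$ in the $+$ recursion but $\gamma_{j-1},\overline{\gamma}_{j-1}$ with opposite signs in the $-$ recursion) and the $\gamma$-indexing of the Schur algorithm (which uses $\gamma_j$ at the $j$-th step). Once the appropriate parameter sequence is identified in each case, the identity is immediate, in accord with the author's remark that the lemma is ``due to just a simple comparing between Eqs.~(\ref{GFunc+}) and (\ref{SchurFunc})''.
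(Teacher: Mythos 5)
Your overall strategy---reduce both recursions to M\"obius form and match them step by step, then invoke uniqueness of the continued fraction---is exactly the comparison the paper has in mind (its entire ``proof'' is the remark that the lemma follows by ``simple comparing''), and your treatment of the $(-)$ identity and of the degenerate branch $\gamma=0$ is correct. But in the $(+)$ case the step you call immediate is false: the two recursions do \emph{not} coincide. Your own (correct) reduction of Eq.~(\ref{GFunc+}) reads
\begin{equation*}
\fp_j(z)=-z^2\,\frac{\overline{\gamma}_{j+1}+\fp_{j+1}(z)}{1+\gamma_{j+1}\fp_{j+1}(z)},
\end{equation*}
whereas substituting $w=z^2$ into the inverse Schur step with $j$-th parameter $\overline{\gamma}_{j+1}$ and setting $h_j(z)=z^2 f_j^{(\overline{\gamma}_1,\overline{\gamma}_2,\dots)}(z^2)$ gives
\begin{equation*}
h_j(z)=+z^2\,\frac{\overline{\gamma}_{j+1}+h_{j+1}(z)}{1+\gamma_{j+1}h_{j+1}(z)}.
\end{equation*}
The overall minus sign cannot be absorbed by any other choice of Schur parameters (a map $x\mapsto(u+x)/(1+\overline{u}x)$ is never the negative of another such map: compare values at $x=0$ and at $x=-u$), nor by replacing $h$ by $-h$. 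So ``$\fp_j$ and $z^2f_j(z^2)$ satisfy the same M\"obius recursion'' contradicts the formula you displayed two lines earlier, and the $(+)$ half of your argument breaks exactly there.

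The resolution is that the paper is internally inconsistent at this point, and a complete proof must notice it. Equation~(\ref{gj}) and everything downstream in Section~4 (e.g.\ the closed form for $\lp_j$) are consistent with the Lemma as stated but \emph{not} with Eq.~(\ref{GFunc+}) as printed: one checks directly that $\widetilde{g}_j^{(+)}(z)=z^2/\bigl(r+1+j-(r+j)z^2\bigr)$ satisfies $\fp_j=+z^2(\overline{\gamma}_{j+1}+\fp_{j+1})/(1+\gamma_{j+1}\fp_{j+1})$ and not the version with the leading minus. Hence the leading minus sign in Eq.~(\ref{GFunc+}) is a typo (note that the $(-)$ recursion (\ref{GFunc-}), which you matched correctly, reduces to $z^2(-\gamma_{j-1}+\fm_{j-1})/(1-\overline{\gamma}_{j-1}\fm_{j-1})$ with no such extra sign). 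Once the recursion is corrected, your comparison goes through verbatim, and your closing appeal to uniqueness of the function determined by the infinite continued fraction (equivalently, by the Schur parameter sequence) is the right way to finish. As written, however, your $(+)$ case asserts an identification that your own intermediate formula refutes.
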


In the next section, we show an example that the connection works well 
to get a stochastic behaviors of a QW whose quantum coin at each vertex tends to be transparent 
as the distance goes to infinity. 

\section{Stochastic behaviors of spatial dependent quantum walks}\label{CMV_Gen}
Now in this section, we consider H-QW(I) with coin parameter 
\begin{equation}\label{powerlawpara}
	\gamma_j=1/(r+j)\;\; (r>1).
\end{equation} 
For each time $n\in \mathbb{N}$, define a map $\mu_n: \mathbb{Z}_+\to [0,1]$ as 
\[ \mu_n(j)=||\Pi_{j}U^{(\bs{\gamma})}(\alpha\bs{\delta}_{0,L}+\beta\bs{\delta}_{0,R})||^2=||\Xi_n(j)\varphi_0||^2,  \]
where $\Pi_j$ is orthogonal projection onto subspace spanned by $\bs{\delta}_{j,L}$ and $\bs{\delta}_{j,R}$, 
and $\varphi_0={}^T[\alpha,\beta]$. 
From the unitarity of the time evolution, $\sum_{j\geq 0}\mu_n(j)=1$. So we call $\mu_n$ distribution of QW at time $n$. 
We also define a dual distribution of $\mu_n$ as $\widetilde{\mu}_n(j)\equiv {\mu}_n(n-j)$. 
In this section, our interest focuses on the sequence of distributions $(\mu_n)_{n\in \mathbb{N}}$ and $(\widetilde{\mu}_n)_{n\in \mathbb{N}}$. 
Put for each time $n$, a random variable $X_n$ following $\mu_n$, that is, $P(X_n=j)=\mu_n(j)$. 
To get its generating function, first of all, we consider its spectrum as follows. 
The Schur function for the Schur parameter $(\gamma_0,\gamma_1,\dots)$ is known as (see for example \cite{CGMV})
\begin{equation}\label{SF}
f_j(z)=\frac{1}{r+j-(r-1+j)z},\;\;j\in\{0,1,2,\dots\}.
\end{equation}
On the other hand, combining Lemma \ref{bridgeGS} with Eq.~(\ref{SF}) implies 
\begin{equation}\label{gj}
	\widetilde{g}_j^{(+),(\gamma_1,\gamma_2,\dots)}(z)=\frac{z^2}{r+1+j-(r+j)z^2}. 
\end{equation}
Equations (\ref{Carathe}) and (\ref{SF}) imply 
\begin{equation}\label{F}
F(z)=\frac{(z+1)(z-r/(r-1))}{(z-1)(z+r/(r-1))}. 
\end{equation}
Then from Eqs.~(\ref{ac}), (\ref{mass}) and (\ref{F}), the spectral measure is 
\begin{equation}\label{measure}
	d\nu(e^{i\theta})=\frac{\cos^2 \frac{\theta}{2}}{\cos^2 \frac{\theta}{2}+\frac{1}{4r(r-1)}} \frac{d\theta}{2\pi}+\frac{1}{2r-1}\delta(e^{i\theta}-1). 
\end{equation}
Note that the coin parameter $\gamma=0$ means that its quantum coin is the identity, that is, perfect transmission. 
In our model, since $\gamma_j\to 0$ $(j\to\infty)$, quantum coin assigned at each vertex tends to the identity as the distance from the origin increases. 
The full unit circle $\theta\in[0,2\pi)$ of the support for absolutely continuous part of RHS in Eq.~(\ref{measure}) reflects its argument, 
since the support of the measure for the QW with coin parameter $(0,0,\dots)$ is also on the full unit circle $\theta\in [0,2\pi)$. 
Therefore, at the first glance, it seems that the walker gets farther and farther away from the origin. 
Nevertheless, the measure has also mass point at $\theta=0$. 
This fact predicts the opposite property, that is, ``localization" at the origin as we have already seen other QW models, for example \cite{CGMV,KS}. 
So what happens to this walk? 
The following theorem gives its answer. 

\begin{theorem}\label{localization}
Assume that H-QW(1) whose coin parameter is $(\gamma_0,\gamma_1,\dots)$ with $\gamma_j=1/(r+j)$ $(r>1)$ is 
located at the origin with the initial coin state $\varphi_0={}^T[\alpha, \beta]$ $(|\alpha|^2+|\beta|^2=1)$. 
Then we have 
\begin{enumerate}
\item (Localization around the origin) 
\begin{equation}\label{origin}
\lim_{n\to \infty} \mu_n(j)=\frac{r^2-1}{(r-1+j)(r+1+j)}\mu_\infty^{o},
\end{equation}
where 
\[\mu_\infty^{o}= \frac{2r^2}{(1+r)^2(1-2r)^2}\left|\alpha\sqrt{1-1/r}+\beta\sqrt{1+1/r}\right|^2. \]
\item (Localization around the bottom)
\begin{equation}\label{bottom}
\lim_{n\to \infty} \widetilde{\mu}_n(j)=
	\begin{cases}
        \frac{r}{r+1}\left| -\alpha/r+\beta\sqrt{1-1/r^2} \right|^2 & \text{: $j=0$, }\\
        \\
        \frac{r-1}{r}\left| \alpha\sqrt{1-1/r^2}+\beta/r \right|^2 & \text{: $j=1$, }\\
        \\
        \frac{1}{r(r+1)(r-1)^2}\left| \alpha\frac{r^2+r-1}{r-1}-\beta\sqrt{r^2-1} \right|^2\left(\frac{1-r}{r}\right)^{2j} & \text{: $j\geq 2$. }
        \end{cases}
\end{equation}
\end{enumerate}
Moreover $c_0+c_1=1$, with $c_0=\sum_{j\geq 0} \lim_{n\to\infty}\mu_n(j)$, $c_1=\sum_{j\geq 0} \lim_{n\to\infty}\widetilde{\mu}_n(j)$. 
\end{theorem}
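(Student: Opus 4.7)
\medskip
\noindent\emph{Proof plan.} The plan is to exploit that every g-function in this model is elementary: combining \eqref{SF} with Lemma \ref{bridgeGS} gives $\fp_j(z) = z^2/[(r+1+j)-(r+j)z^2]$ for all $j\geq 0$, so after substitution into Lemma \ref{acco} the generating function $\widetilde{\Xi}_j(z)\varphi_0$ becomes an explicit rational vector, and both parts of the theorem reduce to extracting Taylor coefficients in $z$.

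For Part (1), I would clear $h_1(z):=r+1-rz^2$ from the denominator of the H-QW(1) case of \eqref{pero0} and show what remains factors as a constant multiple of $(z-1)((r-1)z+r)$. For $r>1$ the only pole in the closed unit disk is the simple pole at $z=1$ (the other root $-r/(r-1)$ lies outside), so by Darboux transfer $\lim_n \Xi_n(j)\varphi_0 = \lim_{z\to 1}(1-z)\widetilde{\Xi}_j(z)\varphi_0$. Using $\fp_j(1)=1$ and $\lp_j(1)=\sqrt{(r+j-1)/(r+j+1)}$, the product $\prod_{i=1}^{j-1}\lp_i(1)^2$ telescopes via $\Gamma$-function identities to $r(r+1)/[(r+j-1)(r+j)]$; the resulting residue is a rank-one matrix applied to $\varphi_0$, and its squared norm reproduces \eqref{origin}.

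For Part (2), I would write $\widetilde{\Xi}_j(z)\varphi_0 = z^j\,\mu(z)\,g_j(z)$, where $\mu(z) := [-\gamma_0,\rho_0]\widetilde{\Xi}_0(z)\varphi_0$ and $g_j(z)$ is the two-vector inherited from Lemma \ref{acco}. Setting $h_j(z) := (r+j)-(r+j-1)z^2$ exposes the factorization $\lp_i(z) = z\sqrt{(r+i-1)/(r+i+1)}\,h_{i+1}(z)/h_i(z)$, which telescopes $\prod_{i=1}^{j-1}\lp_i(z)$, and (using the reality of $\gamma_0$) $\mu(z)$ collapses to $[\alpha(z-1/r)+\rho_0\beta]/D(z)$ with $D(z)\propto (z-1)((r-1)z+r)/h_1(z)$. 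Because $\Xi_n(j)=0$ for $n<j$, the identity $\Xi_n(n-k)\varphi_0 = [z^k]\mu(z)g_{n-k}(z)$ is exact, and the coefficientwise limit $g_{n-k}\to g_\infty(z):=\sqrt{r/(r+1)}\,(1-z^2)/[1-(r/(r+1))z^2]\cdot {}^T[0,1]$ holds as $n\to\infty$. The key observation is that $h_1(z)$ appears both in $D(z)^{-1}$ (inside $\mu$) and as a scalar multiple of $1-(r/(r+1))z^2$ (inside $g_\infty$), so after cancellation $\mu(z)g_\infty(z)$ reduces to the simple rational vector $\sqrt{r/(r+1)}\,[\alpha(rz-1)+\beta\sqrt{r^2-1}](z+1)/[(r-1)z+r]\cdot {}^T[0,1]$. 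A partial fraction $(z+1)/[(r-1)z+r] = 1/(r-1) - 1/[(r-1)^2(z-z_1)]$ with $z_1:=-r/(r-1)$, combined with the geometric series $1/(z-z_1) = -\sum_k z^k/z_1^{k+1}$, yields all three cases $k=0,1,\geq 2$ of \eqref{bottom}, the decay $((1-r)/r)^{2k}$ coming from $|1/z_1|^{2k}$.

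For $c_0+c_1=1$, I would telescope $\sum_{j\geq 0}(r^2-1)/[(r-1+j)(r+1+j)] = (r+1)(2r-1)/(2r)$, sum the $k\geq 2$ tail of \eqref{bottom} as a geometric series of ratio $((r-1)/r)^2$, and verify that the resulting Hermitian form in $(\alpha,\beta)$ has $|\alpha|^2$-coefficient $1$, $|\beta|^2$-coefficient $1$, and vanishing cross term, which reduces to a pair of polynomial identities in $r$. The hardest step is the Part (2) simplification: spotting the $h_1(z)$ cancellation between $\mu$ and $g_\infty$ is what makes the exact closed form \eqref{bottom} visible and identifies $z_1 = -r/(r-1)$ as the pole governing the geometric decay; without that observation one is left with a much messier partial-fraction analysis involving the spurious poles of $1-(r/(r+1))z^2$ as well.
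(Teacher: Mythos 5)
Your proposal is correct and follows essentially the same route as the paper: explicit rational generating functions from Lemmas~\ref{acco} and \ref{bridgeGS} with the Schur function (\ref{SF}), the telescoping of the product of the $\lp_i$, the crucial cancellation of the factor $r+1-rz^2$ between that product and $\widetilde{\Lambda}_0^{-1}$ (Eqs.~(\ref{EqA1})--(\ref{EqA2})), and coefficient extraction by partial fractions with the pole at $-r/(r-1)$ governing the geometric decay at the bottom. The only organizational difference is that the paper packages both limits into a single exact finite-$n$ decomposition $\Xi_j(n)=L_j(n)+B_j(n)+I_j(n)$ (Lemma~\ref{decom}) obtained from the residue at $z=0$, whereas you treat the origin limit via the dominant simple pole at $z=1$ and the bottom limit via coefficientwise convergence of $g_{n-k}$; both mechanisms are valid for these rational generating functions.
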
 
The contribution of the mass point in Eq.~(\ref{measure}) is localization of the QW around the origin with not exponential but {\it power-law} decay. 
On the other hand, the absolutely continuous part in Eq.~(\ref{measure}) gives a ``strongly" ballistic behavior of the QW, in that, as $n\to \infty$, 
	\[ \mu_n([\tau (n-j)])\to \begin{cases} 0 & \text{$(0<\tau<1)$} \\ \widetilde{\mu}_\infty(j)>0 & \text{$(\tau=1)$} \end{cases}\]
Moreover $\widetilde{\mu}_\infty(j)$ is exponential decay, like usual localization however around {\it bottom}. 
We call ``bottom localization" to this kind of ballistic transport. 
Appropriate choice of initial coin state gives opposite properties in this walk, i.e., localizations at the origin and bottom simultaneously. 

By the way, let us consider the assertion of Eq.~(\ref{bottom}) in $j=0$ and $j=1$ cases. 
We can compute directly the weights of two kinds of paths, ``right $\to$ right $\to$ right $\to\cdots$ right" and 
``stay $\to$ right $\to$ right $\to\cdots$ right", which corresponds to $j=0$ and $j=1$ cases, respectively. 
From Eq.~(\ref{weightpath}), 
\begin{align*}
	\Xi_{n}(n) &= P_{n-1}\cdots P_1P_0=\rho_{n}\cdots \rho_1P_0 \\
                   &= \sqrt{\frac{r}{r+1}\left( 1+\frac{1}{r+n} \right)}P_0 \\ 
                   &\;\;\;\to \sqrt{\frac{r}{r+1}}P_0\;\; (n\to\infty) \\
        \Xi_{n-1}(n) &= P_{n-2}\cdots P_0S_0=\rho_{n-1}\cdots \rho_0 S_0 \\
                   &= \sqrt{\frac{r-1}{r}\left( 1+\frac{1}{r+n-1} \right)}S_0 \\ 
                   &\;\;\;\to \sqrt{\frac{r-1}{r}}S_0\;\; (n\to\infty)           
\end{align*}
Indeed, we have 
\begin{align*}
	\lim_{n\to\infty}\widetilde{\mu}_n(n) &= \frac{r}{r+1}||P_0\varphi_0||^2=\frac{r}{r+1}\left| -\alpha/r+\beta\sqrt{1-1/r^2} \right|^2, \\
        \lim_{n\to\infty}\widetilde{\mu}_n(n-1) &= \frac{r-1}{r}||S_0\varphi_0||^2=\frac{r-1}{r}\left| \alpha\sqrt{1-1/r^2}+\beta/r \right|^2. 
\end{align*}
Thus 
\begin{remark}
Equation (\ref{bottom}) in cases $j=0$ and $j=1$ yields that the two infinite paths itself are ``localized" in the above sense. 
\end{remark}

\begin{figure}[htbp]
\begin{center}
	\includegraphics[width=100mm]{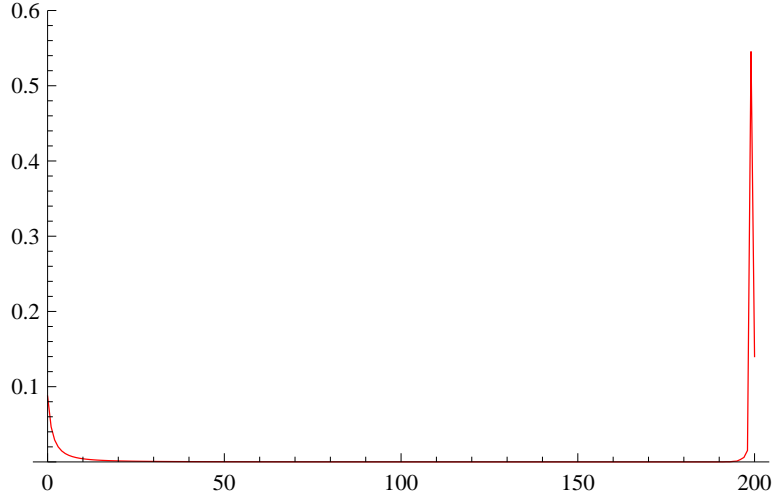}
\end{center}
\caption{Numerical simulation of H-QW(1) with coin parameter $\gamma_j=1/(3+j)$: 
the horizontal and vertical axes depict position and $\mu_{200}$ with the initial coin state $[1/\sqrt{2},1/\sqrt{2}]$, respectively. }
\label{fig:one}
\end{figure}

To prove Theorem \ref{localization}, we prepare the following key lemma 
with respect to the weight of passage $\Xi_j(n)$ in a closed form. 
We give its proof in the last part of this section.  
\begin{lemma}\label{decom}
The weight of passage is decomposed into three parts. 
\begin{equation}\label{Xi} \Xi_{j}(n)=L_{j}(n)+B_{j}(n)+I_{j}(n), \end{equation}
where $L_{j}(n)$, $B_{j}(n)$ and $I_{j}(n)$ are defined in Eqs.~(\ref{BLI}), (\ref{BLI1}) and (\ref{BLI2}), respectively. 
Moreover for fixed $j$, $\lim_{n\to\infty}I_{j}(n)=\lim_{n\to\infty}I_{n-j}(n) = 0$, 
\begin{align}
\lim_{n\to\infty}L_{j}(n) &= \sqrt{\frac{r(r-1)}{(r+1)(2r-1)^2}}\frac{1}{\sqrt{r+j}}\begin{bmatrix}1/\sqrt{r+1+j} \\ 1/\sqrt{r-1+j} \end{bmatrix}{}^\dagger\bs{l} \\
\lim_{n\to\infty}B_{n-j}(n) &= -\frac{1}{\sqrt{r(r+1)(r-1)^4}}\begin{bmatrix}1 \\ 0 \end{bmatrix} 
	\left\{ \left( \frac{1-r}{r} \right)^j {}^\dagger\bs{b}^{E}+\delta_0(j){}^\dagger\bs{b}^{0}+\delta_1(j){}^\dagger\bs{b}^{1} \right\} 
\end{align}
and $\lim_{n\to\infty}B_{j}(n)=\lim_{n\to\infty}L_{n-j}(n)=0$. \\
Here
\begin{align}
\bs{b}^E &= \begin{bmatrix} 1-r-r^2 \\ (r-1)\sqrt{r^2-1} \end{bmatrix}, \;\bs{b}^0 = \begin{bmatrix} 2r-1 \\ -(r-1)\sqrt{r^2-1} \end{bmatrix},  \;\bs{b}^1 = \begin{bmatrix} 1 \\ 0 \end{bmatrix},\;
\bs{l} = \begin{bmatrix} \sqrt{r-1} \\ \sqrt{r+1} \end{bmatrix}. 
\end{align}
\end{lemma}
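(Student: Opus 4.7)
My approach is to exploit the remarkable rationality of $\widetilde{\Xi}_j(z)$ for the coin sequence $\gamma_j = 1/(r+j)$. First I would combine Eq.~(\ref{SF}) with Lemma~\ref{bridgeGS} to get the closed form $\fp_j(z) = z^2/(a_{j+1}-a_j z^2)$, where $a_j := r+j$. Substituting into $\lp_j(z) = z\rho_j/(1+\gamma_j\fp_j(z))$ and using the identity
\begin{equation*}
a_j a_{j+1}-(a_j^2-1)z^2 = a_{j+1}(a_j-a_{j-1}z^2),
\end{equation*}
which follows from $a_j^2-1 = a_{j-1}a_{j+1}$, yields the factored form $\lp_j(z) = z\sqrt{a_{j-1}/a_{j+1}}\,(a_{j+1}-a_j z^2)/(a_j-a_{j-1}z^2)$. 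Both the square-root and the rational parts telescope, giving $\prod_{i=1}^{j-1}\lp_i(z) = z^{j-1}\sqrt{a_0 a_1/(a_{j-1}a_j)}\,(a_j-a_{j-1}z^2)/(a_1-a_0 z^2)$. A direct computation of the row $[-\gamma_0,\rho_0]\widetilde{\Xi}_0(z)$ (using Lemma~\ref{acco} for H-QW(1)) reveals a factor $(a_1-a_0 z^2)=(r+1)-r z^2$ in its denominator that cancels exactly with that in the telescoped product. Substituting into (\ref{pooh}), $\widetilde{\Xi}_j(z)$ collapses to a rational matrix of the shape (polynomial of degree $j+O(1)$)$\,/\,((z-1)(z+r/(r-1)))$, with only two simple poles --- one on the unit circle at $z=1$ and one outside at $z=-r/(r-1)$ (since $r>1$).

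Next I would perform a partial fraction decomposition, $\widetilde{\Xi}_j(z) = A_j/(z-1)+C_j/(z+r/(r-1))+P_j(z)$, and read off the coefficients via $[z^n](z-1)^{-1}=-1$ and $[z^n](z+r/(r-1))^{-1}=((r-1)/r)(-(r-1)/r)^n$, giving
\begin{equation*}
\Xi_n(j) = -A_j + C_j\cdot\tfrac{r-1}{r}\!\left(-\tfrac{r-1}{r}\right)^{\!n} + [z^n]P_j(z),
\end{equation*}
which I identify term-by-term with $L_j(n)$, $B_j(n)$, and $I_j(n)$. For the residue at $z=1$, a single-point evaluation gives $A_j$ with the scalar factor $1/\sqrt{a_j a_{j+1}}$ (or $1/\sqrt{a_{j-1}a_j}$ depending on matrix entry) times the row $[r-1,\sqrt{r^2-1}]=\sqrt{r-1}\,\bs{l}^\dagger$; this is constant in $n$, matches the claimed $\lim_{n\to\infty} L_j(n)$ for fixed $j$, and vanishes as $1/\sqrt{n}$ under $j=n-k$ (so $\lim_{n\to\infty} L_{n-j}(n)=0$). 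For the residue at $z=-r/(r-1)$, $C_j$ carries $(-r/(r-1))^{j+O(1)}$; multiplied by $(-(r-1)/r)^n$ this decays exponentially for fixed $j$ (so $\lim_{n\to\infty} B_j(n)=0$), but under $j=n-k$ collapses to $(-r/(r-1))^{2-k}$, producing the advertised $((1-r)/r)^k$ dependence. The residue's polynomial prefactor simplifies via the one-line identity $a_j(r-1)^2-a_{j-1}r^2 = -(r(r-1)+j(2r-1)) \sim -j(2r-1)$, which combines with $1/\sqrt{a_{j-1}a_j}\sim 1/j$ to yield the stated $O(1)$ limit. Since $P_j(z)$ has degree about $j+1$, $[z^n]P_j(z)=0$ once $n>j+1$: this gives $I_j(n)\to 0$ for fixed $j$ and $I_{n-k}(n)=0$ for $k\geq 2$, while the leading and sub-leading coefficients at $k=0,1$ reproduce the special $\delta_0(j)$ and $\delta_1(j)$ contributions, which I would cross-check against the direct single-path calculations $\Xi_n(n)\to\sqrt{r/(r+1)}\,P_0$ and $\Xi_n(n-1)\to\sqrt{(r-1)/r}\,S_0$ displayed in the Remark following the theorem.

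The main obstacle I anticipate is matrix-valued bookkeeping rather than any deep analytic difficulty, since the generating function is rational with two explicitly known simple poles. Concretely, the row factor $[rz-1,\sqrt{r^2-1}]$ inherited from $[-\gamma_0,\rho_0]\widetilde{\Xi}_0(z)$ takes the value $[r-1,\sqrt{r^2-1}]$ at $z=1$ and $[-(r^2+r-1)/(r-1),\sqrt{r^2-1}]$ at $z=-r/(r-1)$; these two rows, together with the column structure of the telescoped product and the appropriate $a_j$-polynomial prefactors, must be simplified precisely into the stated row vectors $\bs{l}^\dagger$, $\bs{b}^{E\dagger}$, $\bs{b}^{0\dagger}$, $\bs{b}^{1\dagger}$ with the advertised normalizing scalars $\sqrt{r(r-1)/((r+1)(2r-1)^2)}$ and $1/\sqrt{r(r+1)(r-1)^4}$. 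This is patient algebra but requires no new ideas.
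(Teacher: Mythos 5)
Your proposal follows essentially the same route as the paper: closed form of $\fp_j$ from the Schur function via Lemma~\ref{bridgeGS}, the explicit factored $\lp_j$, telescoping of $\lp_{j}\cdots\lp_1$, cancellation of the factor $(r+1)-rz^2$ against $1/\widetilde{\Lambda}_0$, and extraction of $[z^n]$ from the resulting rational function with simple poles at $z=1$ and $z=-r/(r-1)$ (the paper phrases this as the residue at $z=0$ of $\widetilde{\Xi}_j(z)/z^{n+1}$ rather than partial fractions, but that is the same computation). The one caveat is bookkeeping: your $I_j(n):=[z^n]P_j(z)$ is supported on $n\le j+1$ and is \emph{not} the $I_j(n)$ of Eq.~(\ref{BLI2}), which is the $1/\sqrt{r+j}$-suppressed portion of the geometric residue at $z=-r/(r-1)$; correspondingly, the $\delta_0$, $\delta_1$ terms your polynomial part produces must be folded into $B$ to match Eq.~(\ref{BLI}), and the remaining $O(1/j)$ piece of the geometric residue must be split off from your $B$ into the paper's $I$ so that all the stated limits (in particular $\lim_{n\to\infty}I_{n-j}(n)=0$) read off correctly.
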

\begin{remark}
From the linear independence of $\bs{b}^{0}$, $\bs{b}^{1}$ and $\bs{b}^{E}\in \mathbb{C}^2$, 
the strongly ballistic spreading, i.e., bottom localization, is always ensured for all initial coin state $\varphi_0$ in this walk. 
On the other hand, choosing an appropriate initial coin state so that $\langle \varphi_0,\bs{l}\rangle=0$ eliminates localization at the origin. 
\end{remark}
%
\noindent {\it Proof of Theorem~\ref{localization}. }
The first term of RHS in Eq.~(\ref{Xi}), $L_j(n)$, contributes localization with power-law around the origin 
corresponding to part (1) in Theorem~\ref{localization}, 
while the second term, $B_j(n)$, gives the bottom localization corresponding to part (2). 
The finial term, $I_j(n)$, is an intermediate term between first and second terms. 
Lemma~\ref{decom} implies that 
\begin{equation}\label{Eq1}
\mu_\infty(j)=|| \lim_{n\to\infty}L_j(n)\varphi_0 ||^2,\;\widetilde{\mu}_\infty(j)=|| \lim_{n\to\infty}B_{n-j}(n)\varphi_0 ||^2. 
\end{equation}
Algebraic computations of RHS in Eq.~(\ref{Eq1}) give desired conclusion. \begin{flushright}$\square$\end{flushright}
%
\begin{theorem}\label{weakconv}\noindent
\begin{enumerate}
\item (Weak convergence)
\begin{equation}
\frac{X_n}{n}\Rightarrow c_0\delta_0(x)+c_1\delta_1(x)\;\;\;(n\to \infty), 
\end{equation}
where $c_0=\sum_{j\geq 0}\mu_\infty(j)$, $c_1=1-c_0=\sum_{j\geq 0}\widetilde{\mu}_\infty(j)$. 
\item (Large deviation type convergence)
For the initial state $\varphi$ with $\langle \varphi_0, \bs{l}\rangle=0$, we have 
\begin{equation}
\lim_{n\to \infty}\frac{1}{n}\log P\left(1-\frac{X_n}{n}>\epsilon\right)=\epsilon \log \left(\frac{1-r}{r}\right)^2. 
\end{equation}
\end{enumerate}
\end{theorem}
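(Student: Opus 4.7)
For part (1), I plan to apply the Portmanteau theorem to the pointwise convergences supplied by Theorem~\ref{localization}. Since $\mu_\infty(j)=O(1/j^2)$ and $\widetilde{\mu}_\infty(j)$ decays geometrically, both limit sequences are summable, and by hypothesis $c_0+c_1=1$. Combined with $\mu_n(j)\to\mu_\infty(j)$ and $\widetilde{\mu}_n(j)\to\widetilde{\mu}_\infty(j)$ for each fixed $j$, these facts give
\begin{equation*}
\sum_{K<j<n-K}\mu_n(j)=1-\sum_{j=0}^{K}\mu_n(j)-\sum_{j=0}^{K}\widetilde{\mu}_n(j)\longrightarrow 1-\sum_{j=0}^{K}\mu_\infty(j)-\sum_{j=0}^{K}\widetilde{\mu}_\infty(j)
\end{equation*}
as $n\to\infty$, and the right-hand side is arbitrarily small for $K$ large, since the two full sums add to $c_0+c_1=1$. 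For $f\in C_b([0,1])$, I split
\begin{equation*}
E[f(X_n/n)]=\sum_{j\le K}f(j/n)\mu_n(j)+\sum_{K<j<n-K}f(j/n)\mu_n(j)+\sum_{k\le K}f(1-k/n)\widetilde{\mu}_n(k).
\end{equation*}
On the first and third sums $f(j/n)\to f(0)$ and $f(1-k/n)\to f(1)$ uniformly in $j,k\le K$; the middle sum is bounded by $\|f\|_\infty$ times the vanishing tail just computed. Sending $n\to\infty$ and then $K\to\infty$ yields $E[f(X_n/n)]\to c_0 f(0)+c_1 f(1)$, which is part (1).

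For part (2), the hypothesis $\langle\varphi_0,\bs{l}\rangle=0$ annihilates the rank-one matrix $\lim_n L_j(n)$ appearing in Lemma~\ref{decom}, forcing $\mu_\infty\equiv 0$ and hence $c_0=0$, $c_1=1$. Rewriting,
\begin{equation*}
P(1-X_n/n>\epsilon)=\sum_{k\ge\lceil n\epsilon\rceil}\widetilde{\mu}_n(k),
\end{equation*}
and Theorem~\ref{localization} gives $\widetilde{\mu}_\infty(k)=A\bigl(\tfrac{r-1}{r}\bigr)^{2k}$ for $k\ge 2$, with $A=A(\varphi_0,r)>0$: the coefficient vector $\bigl[\tfrac{r^2+r-1}{r-1},-\sqrt{r^2-1}\bigr]^{T}$ appearing in (\ref{bottom}) is linearly independent of $\bs{l}$, so the orthogonality condition does not also kill $A$. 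This prescribes the geometric decay rate $\bigl(\tfrac{1-r}{r}\bigr)^{2}$ for the limiting tail sum, and the target $\epsilon\log\bigl(\tfrac{1-r}{r}\bigr)^2$ follows formally after summing a geometric series and taking $\log/n$.

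The main obstacle is upgrading the pointwise convergence of $\widetilde{\mu}_n(k)$ (at each fixed $k$) to bounds valid at $k=\lceil n\epsilon\rceil$, which grows linearly in $n$. Specifically I need a uniform upper bound $\widetilde{\mu}_n(k)\le C\bigl(\tfrac{r-1}{r}\bigr)^{2k}$ for all $k\le n$, together with a matching lower bound $\widetilde{\mu}_n(\lceil n\epsilon\rceil)\ge (A/2)\bigl(\tfrac{r-1}{r}\bigr)^{2\lceil n\epsilon\rceil}$ for $n$ large. I would attack this by revisiting the proof of Lemma~\ref{decom} and extracting an explicit finite-$n$ closed form for $B_{n-k}(n)$, in the spirit of the calculations $\Xi_n(n)=\rho_n\cdots\rho_1 P_0$ and $\Xi_{n-1}(n)=\rho_{n-1}\cdots\rho_0 S_0$ already displayed in the text. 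Because $\gamma_j=1/(r+j)\to 0$, one expects a factorization $B_{n-k}(n)=\bigl(\tfrac{1-r}{r}\bigr)^{k}M^{(n)}_k$ with $M^{(n)}_k$ uniformly bounded and convergent as $n-k\to\infty$; the orthogonality condition is also what lets one discard the $L_{n-k}(n)\varphi_0$ and $I_{n-k}(n)\varphi_0$ contributions uniformly. Summing the resulting geometric bounds then yields the stated logarithmic rate.
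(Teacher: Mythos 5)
Your proposal matches the paper's proof in both structure and substance: part (1) is treated there as an immediate consequence of Theorem~\ref{localization} (your split/tightness argument merely fills in the routine details), and part (2) is proved exactly as you outline, by using $\langle\varphi_0,\bs{l}\rangle=0$ to annihilate $L_{n-j}(n)\varphi_0$ and then summing the geometric contributions of $B_{n-j}^*B_{n-j}$, $B_{n-j}^*I_{n-j}$ and $I_{n-j}^*I_{n-j}$ over $[n\epsilon]\le j\le n$. The ``main obstacle'' you flag is already resolved inside Lemma~\ref{decom}: Eqs.~(\ref{BLI})--(\ref{BLI2}) are exact closed forms valid for every finite $n$ (obtained as residues of the generating function), so the uniform-in-$k$ geometric upper and lower bounds you ask for follow directly, with no further estimate needed.
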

\begin{proof}\noindent
\begin{enumerate}
\item 
As a consequence of Theorem \ref{localization}, we immediately obtain part (1).  
\item
Now we consider 
\begin{equation}\label{Eq2} 
P(1-X_n/n>\epsilon)=\sum_{j=[n\epsilon]}^{n}\langle  \Xi_{n-j}(n)\varphi_0,  \Xi_{n-j}(n)\varphi_0\rangle 
\end{equation}
for large $n$. 
The initial state $\langle\varphi_0,\bs{l}\rangle=0$ implies that we only estimate $B_{n-j}(n)$ and $I_{n-j}(n)$, that is, 
\begin{multline*} 
	\Xi_{n-j}^*(n)\Xi_{n-j}(n) = \\
		B_{n-j}^*(n)B_{n-j}(n)+I_{n-j}^*(n)I_{n-j}(n)+B_{n-j}^*(n)I_{n-j}(n)+I_{n-j}^*(n)B_{n-j}(n). 
\end{multline*}
We extract the essential parts of $B_{n-j}(n)^*B_{n-j}(n)$, $B_{n-j}(n)^*I_{n-j}(n)$ and $I_{n-j}(n)^*I_{n-j}(n)$ 
which are directly related to the summation in RHS of Eq.~(\ref{Eq2}) as follows:  
\[\left( 1-\frac{1}{r+n-j} \right)\tau^{2j},\;\;
\frac{\tau^{2j}}{r+n-j},\;\; \mathrm{and}\; \frac{\tau^{2j}}{(r+n-j)(r+n-j+1)}, \]
respectively, where $\tau=(1-r)/r$ (see Eqs.~(\ref{BLI}) and (\ref{BLI2}) for the explicit expressions for $B_j(n)$ and $I_{j}(n)$). 
Put 
\[ A_{n,\epsilon}\equiv \sum_{j= [n\epsilon]}^{n}\tau^{2j}\sim \frac{\tau^{2n\epsilon}}{1-\tau^2}(1-\tau^{2n(1-\epsilon)}). \]
Note that 
\[ \sum_{j\geq n\epsilon}^{n}\frac{\tau^{2j}}{(r+n-j)(r+n-j+1)}\leq \sum_{j\geq n\epsilon}^{n}\frac{\tau^{2j}}{r+n-j}\leq \frac{1}{r}A_{n,\epsilon}. \]
Then we have the RHS of Eq.~(\ref{Eq2}) is rewritten as 
\begin{equation}
\sum_{j=[n\epsilon]}^{n}\langle  \Xi_{n-j}(n)\varphi_0,  \Xi_{n-j}(n)\varphi_0\rangle \sim \frac{\tau^{2n\epsilon}}{1-\tau^2}\left(1+O(1)\right).
\end{equation}
Therefore
\[ \lim_{n\to\infty} \frac{1}{n}\log P(1-X_n/n>\epsilon)=\epsilon \log \tau^2.  \]
\end{enumerate}
\end{proof}
Finally, we give the proof of Lemma~\ref{decom}. \\
\noindent \\
\noindent {\it Proof of Lemma~\ref{decom}. } \\
At first, we decompose $\Xi_{n-j}(n)$ as $\Xi_{n-j}(n)=\Pi_{L}\Xi_{n-j}(n)+\Pi_{R}\Xi_{n-j}(n)$, 
where $\Pi_J$ is projection onto basis $\bs{\delta}_J$ ($J\in\{L,R\}$). 
For small $\delta<1$, 
\[ \Xi_{j}(n) = \frac{1}{2\pi i}\int_{|z|=\delta} \widetilde{\Xi}_j(z)\frac{dz}{z^{n+1}}. \]
So we have 
\begin{align}
	  \Pi_{L}\Xi_{n-j}(n) &= \frac{1}{2\pi i} \int_{|z|=\delta} \lp_{n-j}\cdots \lp_{1}\fp_{n-j}
          			\begin{bmatrix}-1/r+z & \sqrt{1-1/r^2} \\ 0 & 0 \end{bmatrix}\frac{1}{\widetilde{\Lambda}_0} \frac{dz}{z^{n+1}}, \label{EqA0}\\
          \Pi_{R}\Xi_{n-j}(n) &= \frac{1}{2\pi i} \int_{|z|=\delta} \lp_{n-j-1}\cdots \lp_{1}
          			\begin{bmatrix}0 & 0 \\ -1/r+z & \sqrt{1-1/r^2} \end{bmatrix}\frac{1}{\widetilde{\Lambda}_0} \frac{dz}{z^{n}}. \label{EqA0'}                      
\end{align}
Since $\lp_j$ is expressed by 
\[ \lp_j=z\sqrt{\frac{r+j-1}{r+j+1}}\frac{(r+j+1)-(r+j)z^2}{(r+j)-(r+j-1)z^2}, \]
we obtain 
\begin{align}\label{EqA1} 
	\lp_{j}\cdots \lp_{1}=\frac{1+1/r}{z^2-(r+1)/r} \left\{ \sqrt{\frac{r+j}{r+j-1}} -z^2 \sqrt{\frac{r+j-1}{r+j}} \right\}z^{j-1}. 
\end{align}
Moreover 
\begin{equation}\label{EqA2} 
\frac{1}{\widetilde{\Lambda}_0}= \frac{-r^2}{1-r^2}\frac{z^2-\frac{r+1}{r}}{\left(z-\frac{r}{1-r}\right)(z-1)}. 
\end{equation}
Substituting Eqs.~(\ref{EqA1}) and (\ref{EqA2}) into Eqs.~(\ref{EqA0}) and (\ref{EqA0'}), 
\begin{multline}\label{weight1}
	\Pi_{L}\Xi_{n-j}(n) = \frac{-\sqrt{\frac{r^3}{(r-1)^2(r+1)}}}{\sqrt{(r+n-j+1)(r+n-j)}} \\
        	\times \int_{|z|=\delta}\frac{1}{\left(z-\frac{r}{r+1}\right)(z-1)}
          			\begin{bmatrix}-1/r+z & \sqrt{1-1/r^2} \\ 0 & 0 \end{bmatrix} \frac{dz}{2\pi iz^{j-1}},
\end{multline}
\begin{multline}\label{weight2}                                
        \Pi_{R}\Xi_{n-j}(n) = \sqrt{\frac{r^3}{(r-1)^2(r+1)}}\sqrt{1-\frac{1}{r+n-j}} \\
        	\times \int_{|z|=\delta}\left\{\frac{z+1}{z-\frac{r}{1-r}}-\frac{1}{r+n-j-1}\frac{1}{\left(z-\frac{r}{1-r}\right)(z-1)}\right\} \\
          			\begin{bmatrix}0 & 0 \\ -1/r+z & \sqrt{1-1/r^2} \end{bmatrix} \frac{dz}{2\pi iz^{j+1}}.   
\end{multline}
Direct computations of the residues at $z=0$ in the integrands of RHSs of Eqs.~(\ref{weight1}) and (\ref{weight2}), respectively, lead to 
\[ \Xi_{j}(n)=B_j(n)+L_{j}(n)+I_j(n), \]
where
\begin{multline}\label{BLI}
B_{n-j}(n) = -\sqrt{1-\frac{1}{r+n-j}}\frac{1}{\sqrt{r(r+1)(r-1)^4}}\begin{bmatrix}1 \\ 0 \end{bmatrix} \\
	\times \left\{ \left( \frac{1-r}{r} \right)^j {}^\dagger\bs{b}^{E}+\delta_0(j){}^\dagger\bs{b}^{0}+\delta_1(j){}^\dagger\bs{b}^{1} \right\},
\end{multline}
\begin{align}
L_{j}(n) &= \sqrt{\frac{r(r-1)}{(r+1)(2r-1)^2}}\frac{1}{\sqrt{r+j}}\begin{bmatrix}1/\sqrt{r+1+j} \\ 1/\sqrt{r-1+j} \end{bmatrix}{}^\dagger\bs{l}, \label{BLI1}\\
I_j(n) &= \frac{\left(\frac{1-r}{r}\right)^{n-j}}{\sqrt{r+j}}\begin{bmatrix}r^3/((r-1)^4(2r-1)^2(r+1))\cdot 1/\sqrt{r+j+1} \\ 1/\sqrt{r(r+1)}\cdot 1/\sqrt{r+j-1} \end{bmatrix}{}^\dagger\bs{b}^E. \label{BLI2}
\end{align}
Obviously, from Eqs.~(\ref{BLI}), (\ref{BLI1}) and (\ref{BLI2}), 
we obtain $\lim_{n\to\infty}I_{j}(n)=\lim_{n\to\infty}I_{n-j}(n)=\lim_{n\to\infty}B_{j}(n)=\lim_{n\to\infty}L_{n-j}(n)=0$, and 
\begin{align*}
\lim_{n\to\infty}L_{j}(n) &= \sqrt{\frac{r(r-1)}{(r+1)(2r-1)^2}}\frac{1}{\sqrt{r+j}}\begin{bmatrix}1/\sqrt{r+1+j} \\ 1/\sqrt{r-1+j} \end{bmatrix}{}^\dagger\bs{l}, \\
\lim_{n\to\infty}B_{n-j}(n) &= -\frac{1}{\sqrt{r(r+1)(r-1)^4}}\begin{bmatrix}1 \\ 0 \end{bmatrix} 
	\left\{ \left( \frac{1-r}{r} \right)^j {}^\dagger\bs{b}^{E}+\delta_0(j){}^\dagger\bs{b}^{0}+\delta_1(j){}^\dagger\bs{b}^{1} \right\}. 
\end{align*}
We complete the proof. 
\begin{flushright}$\square$\end{flushright}
%

\section{Summary and Discussion}
We considered a connection between the Schur function which gives the spectrum of the CMV matrix 
and a generating function of QWs (Lemma~\ref{bridgeGS}). 
We presented an application of this relationship to analysis of stochastic behavior of QWs (Section~\ref{CMV_Gen}). 
In the H-QW(1) with parameters defined by Eq.~(\ref{powerlawpara}), we showed that opposite properties happen  simultaneously, that is, 
localization with a power-law decay around the origin and an ``extreme" ballistic spreading called bottom localization. 

Finally, we discuss the second kind of QW on the half line corresponding to the first kind of QW discussed in Sect.~\ref{CMV_Gen}. 
As we will see below, the non existence of just one self loop at the origin has large effect on a behavior of the bottom localization. 
Indeed, from a similar fashion of Sect.~\ref{CMV_Gen}, applying Lemmas \ref{acco} and \ref{bridgeGS} to H-QW(2), we have the 
following limit theorem corresponding to Theorem~\ref{localization}. 
\begin{theorem}\label{localization2}
Limit theorems for H-QW(2)
\begin{enumerate}
\item (Localization around the origin) 
\begin{equation}\label{origin2}
	\lim_{n\to \infty} \mu_n(j)\sim \frac{1+(-1)^{n+j}}{2}
        \times \begin{cases}
        	\left(\frac{1}{r+1}\right)^2 & \text{: $j=0$,}\\
                \frac{{2r}/(r+1)}{(r-1+j)(r+1+j)} & \text{: $j\geq 1$}
               \end{cases}
\end{equation}
\item (Localization around the bottom)
\begin{equation}\label{bottom2}
\lim_{n\to \infty} \widetilde{\mu}_n(j)= \left(1-\frac{1}{1+r}\right) \delta_0(j). 
\end{equation}
\end{enumerate}
Moreover $c_0+c_1=1$, with $c_0=\sum_{j\geq 0} \lim_{n\to\infty}\mu_n(j)$, $c_1=\sum_{j\geq 0} \lim_{n\to\infty}\widetilde{\mu}_n(j)$. 
\end{theorem}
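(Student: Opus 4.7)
The plan parallels the proof of Theorem~\ref{localization} but is technically cleaner, because the H-QW(2) generating function collapses after one explicit cancellation. First I apply Lemma~\ref{acco}(1) in the H-QW(2) case with the initial state $\varphi_0={}^T[1,0]$, which reduces $\widetilde{\Xi}_0(z)\varphi_0$ to $(1-\fp_0(z))^{-1}\,{}^T[1,0]$. By Lemma~\ref{bridgeGS} and the Schur function $f_j(z)=1/(r+j-(r-1+j)z)$ of Eq.~(\ref{SF}), I have $\fp_j(z)=z^2/((r+1+j)-(r+j)z^2)$, so
\[ 1-\fp_0(z)=\frac{(r+1)(1-z^2)}{r+1-rz^2}. \]
A short telescoping calculation then yields
\[ \lp_{j-1}\cdots\lp_1(z)=z^{j-1}\sqrt{\frac{r(r+1)}{(r+j-1)(r+j)}}\,\frac{(r+j)-(r+j-1)z^2}{(r+1)-rz^2}. \]

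Next I substitute these into Eq.~(\ref{pooh}). The H-QW(2) convention $\gamma_0=-1$, $\rho_0=0$ turns $[-\gamma_0,\rho_0]\widetilde{\Xi}_0(z)\varphi_0$ into the scalar $1/(1-\fp_0(z))$. The decisive observation is that the factor $(r+1)-rz^2$ in the denominator of the telescoped product cancels the identical factor in the numerator of $1/(1-\fp_0(z))$. After this cancellation the only singularities of $\widetilde{\Xi}_j(z)\varphi_0$ inside $|z|\le 1$ are simple poles at $z=\pm 1$, in contrast with H-QW(1), where an off-circle pole near $z=-r/(r-1)$ survives and produces the exponential ``bottom'' tail $((1-r)/r)^j$ of Theorem~\ref{localization}.

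I then read off $\Xi_n(j)\varphi_0=[z^n]\widetilde{\Xi}_j(z)\varphi_0$ by expanding the remaining $1/(1-z^2)=\sum_{k\ge 0}z^{2k}$: each spinor component becomes the parity indicator $\tfrac12(1+(-1)^{n+j})$ times a simple rational factor in $j$, and combining the two components via
\[ \frac{1}{(r+j)(r+j+1)}+\frac{1}{(r+j-1)(r+j)}=\frac{2}{(r+j-1)(r+j+1)} \]
gives Eq.~(\ref{origin2}) after squaring (the $j=0$ case is read directly from $\widetilde{\Xi}_0$). For Eq.~(\ref{bottom2}) I evaluate the same closed form at position $n-j$: parity forces vanishing for odd $j$; for even $j\ge 2$ the surviving coefficient is $O(1/n)\to 0$; only $j=0$ contributes in the limit, where the norm collapses to $\prod_{k=1}^{n-1}\rho_k^2=r(r+n)/((r+1)(r+n-1))\to r/(r+1)$, in agreement with the direct ``all-right'' path calculation of the remark following Lemma~\ref{decom}. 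Finally $c_0+c_1=1$ follows by telescoping $1/((r+j-1)(r+j+1))=\tfrac12(1/(r+j-1)-1/(r+j+1))$ in each parity class of $j$, yielding $c_0=1/(r+1)$ and $c_1=r/(r+1)$.

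The main obstacle is bookkeeping rather than analysis: one must track the parity alternation $\tfrac12(1+(-1)^{n+j})$ carefully (this is the reason for the ``$\sim$'' in Eq.~(\ref{origin2})) and verify that the cancellation of the $(r+1)-rz^2$ factor is exact. This cancellation is precisely the qualitative reason that the bottom localization collapses from an exponentially decaying profile (as in H-QW(1)) to a single delta at the front $j=0$: removing the self-loop at the origin eliminates the only off-unit-circle singularity in $\widetilde{\Xi}_0(z)\varphi_0$, and with it the mechanism producing a nontrivial tail behind the ballistic front.
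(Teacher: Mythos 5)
Your proof is correct and follows exactly the route the paper intends: the paper itself gives no written proof of Theorem~\ref{localization2}, merely asserting that it follows ``from a similar fashion'' by applying Lemmas~\ref{acco} and \ref{bridgeGS} to H-QW(2), and your computation supplies precisely those details (the telescoped product of the $\lp_k$, the exact cancellation of the $(r+1)-rz^2$ factor that removes the off-unit-circle pole responsible for the exponential bottom tail in H-QW(1), the parity bookkeeping, and the telescoping sum giving $c_0=1/(r+1)$, $c_1=r/(r+1)$). All the intermediate identities you state check out against the paper's Eqs.~(\ref{gj}), (\ref{pero0}) and (\ref{pooh}), and your limiting value $r/(r+1)$ at $j=0$ agrees with the all-right-path computation in the remark following Lemma~\ref{decom}.
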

Therefore, the contribution of the bottom localization for H-QW(2) is just nothing but 
the weight of ``right $\to$ right $\to\cdots$ " path itself. 
We leave the doubly infinite case to the readers applying Lemmas \ref{acco} and \ref{bridgeGS}. 
%


\par
\
\par\noindent
\noindent
{\bf Acknowledgments.}
NK acknowledges financial support of 
the Grant-in-Aid for Scientific Research (C) of Japan Society for the Promotion of Science (Grant No. 21540118). 
ES's work was partially supported by 
the Grant-in-Aid for Young Scientists (B) of Japan Society for the Promotion of Science (Grant No. 25800088).
\par
\
\par

\begin{small}
\bibliographystyle{jplain}

\end{small}


\end{document}